\theoremstyle{thmstyletwo}%
\newtheorem{theorem}{Theorem}
\newtheorem{proposition}[theorem]{Proposition}%
\newtheorem{remark}{Remark}%
\numberwithin{equation}{section}
\begin{document}

\DOI{DOI HERE}
\copyrightyear{2021}
\vol{00}
\pubyear{2021}
\access{Advance Access Publication Date: Day Month Year}
\appnotes{Paper}
\copyrightstatement{Published by Oxford University Press on behalf of the Institute of Mathematics and its Applications. All rights reserved.}
\firstpage{1}


\title[A Reduced Landau-de Gennes Study for Nematic Equilibria in Three-Dimensional Prisms]{A Reduced Landau-de Gennes Study for Nematic Equilibria in Three-Dimensional Prisms}

\author{Yucen Han
\address{\orgdiv{Department of Mathematics and Statistics}, \orgname{University of Strathclyde}, \orgaddress{\street{16 Richmond St}, 
\postcode{G1 1XQ},
\state{Glasgow}, \country{United Kingdom}}}}
\author{Baoming Shi
\address{\orgdiv{School of Mathematical Sciences}, \orgname{Peking University}, \orgaddress{\postcode{100871}, \state{Beijing}, \country{China}}}}
\author{Lei Zhang
\address{\orgdiv{Beijing International Center for Mathematical Research, Center for Quantitative Biology, Center for Machine Learning Research}, 
\orgname{Peking University},
\orgaddress{\postcode{100871}, \state{Beijing}, \country{China}}}}
\author{Apala Majumdar* \address{\orgdiv{Department of Mathematics and Statistics}, \orgname{University of Strathclyde}, \orgaddress{\street{16 Richmond St}, 
\postcode{G1 1XQ},
\state{Glasgow}, \country{United Kingdom}}}}

\authormark{Yucen Han et al.}

\corresp[*]{Corresponding author: \href{email:email-id.com}{apala.majumdar@strath.ac.uk}}

\received{Date}{0}{Year}
\revised{Date}{0}{Year}
\accepted{Date}{0}{Year}


\abstract{We model nematic liquid crystal configurations inside three-dimensional prisms, with a polygonal cross-section and Dirichlet boundary conditions on all prism surfaces. We work in a reduced Landau-de Gennes framework, and the Dirichlet conditions on the top and bottom surfaces are special in the sense, that they are critical points of the reduced Landau-de Gennes energy on the polygonal cross-section. The choice of the boundary conditions allows us to make a direct correspondence between the three-dimensional Landau-de Gennes critical points and pathways on the two-dimensional Landau-de Gennes solution landscape on the polygonal cross-section. We explore this concept by means of asymptotic analysis and numerical examples, with emphasis on a cuboid and a hexagonal prism, focusing on three-dimensional multistability tailored by two-dimensional solution landscapes.}
\keywords{ nematic liquid crystals; prism; three-dimensional equilibria; two-dimensional pathway}

\def\Q{\mathbf{Q}}
\def\P{\mathbf{P}}
\def\I{\mathbf{I}}
\def\n{\mathbf{n}}
\def\r{\mathbf{r}}
\def\zhat{\hat{\mathbf{z}}}
\def\xhat{\hat{\mathbf{x}}}
\def\yhat{\hat{\mathbf{y}}}
\def\_v{\mathbf{v}}

\maketitle
\section{Introduction}
Nematic liquid crystals (NLCs) are classical examples of mesophases that combine the fluidity of liquids with the ordering of crystalline solids \cite{dg}. NLCs are anisotropic materials in the sense that the constituent rod-like or asymmetric molecules tend to align along some locally preferred directions, referred to as nematic directors. The directors are distinguished material directions, so that NLCs have direction-dependent physical, mechanical and optical properties \cite{dg,lagerwallreview}. The directionality of NLCs make them the working material of choice for a range of electro-optic devices e.g. display devices, sensors, thermometers, photonics and more recently, NLCs are also used for artificial intelligence, for example in micro-robotics and sensors for bacterial systems \cite{jiang2021using, yaorobots}.

Mathematics can play a crucial role for designer NLC-based materials technologies. One aspect is to accurately predict the observable NLC configurations in prototype settings, that mimic contemporary experiments and applications. Secondly, we want to design NLC configurations with desired properties or structural characteristics i.e. we want to propose mathematical algorithms for stabilising a priori prescribed NLC configurations. Our work in this paper is a forward step in the second direction. In a batch of previous papers \cite{han2020reduced, han2021solution}, we carefully study NLC equilibria on two-dimensional (2D) polygons subject to tangent boundary conditions, for which the nematic director is tangent to the polygon edges. We work in the powerful Landau-de Gennes (LdG) framework, which was one of the reasons for awarding Pierre de Gennes the Nobel Prize for physics in 1991 \cite{dg, newtonmottram, wang2021modeling}. In the LdG framework, the NLC state is described by the LdG $\mathbf{Q}$-tensor order parameter which has five degrees of freedom in three-dimensional (3D) settings. The degrees of freedom contain information about the nematic directors and the degree of nematic order about the directors. In 2D settings, we often work in the reduced LdG framework (rLdG), for which we can employ the reduced LdG order parameter with only two degrees of freedom - to account for the nematic director in the plane and to account for the degree of order about the planar director \cite{han2020reduced}; full details are given in the next section.

In \cite{han2020reduced}, we study the rLdG model on 2D polygons. We study how the rLdG equilibria (which are minimisers of the rLdG free energy and model the physically observable configurations) depend on the polygon edge length. For example, on a square domain, the unique rLdG energy minimiser is the Well Order Reconstruction Solution (WORS), with tangent boundary conditions on the square edges, for small edge lengths comparable to the nematic correlation length \cite{kralj2014order}. The WORS is distinguished by two defect lines along the two square diagonals, and the defect lines partition the square domain into four sub-domains such that the nematic director is constant in each sub-domain. As the edge length increases, the WORS loses stability but exists as a rLdG critical point for all edge lengths. For large square domains, the authors report two classes of rLdG equilibria - the stable diagonal (D) solutions for which the director is aligned along one of the square diagonals , and the rotated (R) solutions for which the director rotates by $\pi$ radians between a pair of parallel square edges. There are two D and four R solutions, and the D solutions have lower rLdG energy than the R states. The interested reader is referred to \cite{luo2012, tsakonas2007} for more details. In \cite{yin2020construction}, the authors compute non energy-minimising saddle points of the rLdG energy; they label the saddle points in terms of their index or the number of negative eigenvalues of the Hessian of the rLdG energy about the saddle point. The authors compute the index of the WORS as a function of the square edge length, being index-$0$ for small edge lengths and the index increases as the edge length increases. The authors also report other saddle points, e.g. the $BD$-state with a pair of line defects along a pair of opposite square edges, and the $T$-state with a line defect along one square diagonal. The unstable saddle points connect the stable $D$ and $R$ solutions i.e. we can find pathways between the $D$ and $R$ solutions, mediated by the high-index unstable saddle points e.g. WORS, $BD$ and $T$ saddle points. These pathways are of relevance whilst studying the switching mechanisms or non-equilibrium dynamics of these toy polygon systems.

We perform analogous studies for a 2D hexagon and pentagon in \cite{han2020reduced, han2021solution}. For small edge lengths (comparable to the nematic correlation length), these polygons support the unique $Ring$ solution, with a single central $+1$-defect consistent with the tangent boundary conditions. As the edge length increases, the $Ring$-solution loses stability and on a $K$-polygon with $K$ edges, there are at least $\frac{K(K-1)}{2}$ stable rLdG equilibria (local minimisers of the rLdG free energy) for large polygons. These large domain equilibria are distinguished by the locations of the so-called ``splay" vertices, such that the director has a splay-like profile near the vertex. The stable rLdG equilibria have two splay vertices, under some physically relevant assumptions and hence, we obtain $\frac{K(K-1)}{2}$ equilibria for the different choices of the splay vertices. On a hexagon, we obtain three distinct classes of rLdG equilibria - $Para$, $Meta$ and $Ortho$, and the $Para$ states have the lowest energy for which the splay vertices are the furthest. On a pentagon, there are two classes of rLdG equilibria - the $Meta$ and the $Ortho$, and we observe analogues of the unstable $BD$-state for all $K$-polygons with $K \geq 4$. There is no analogue of the $WORS$ for $K\neq 4$.

In this paper, we study the rLdG model or critical points of the rLdG energy on three-dimensional prisms, with a polygon cross-section and tangent boundary conditions on the lateral surfaces. We fix the boundary conditions or impose Dirichlet boundary conditions on the top and bottom surfaces, and these boundary conditions are rLdG critical points on the two-dimensional polygon cross-section, consistent with the tangent boundary conditions. Tangent boundary conditions and/or stable high resolution nematic textures on the top and bottom prism surfaces, could potentially be experimentally realised by rubbing techniques and/or photoalignment and photopatterning techniques \cite{chigrinov2013photoaligning}. The first question concerns the relevance of the rLdG model in a three-dimensional setting i.e. how can we constrain the nematic director (or the leading eigenvector of the LdG $\Q$-order parameter) to be in the cross-section plane or to be two-dimensional, in the prism interior, for a 3D setting? The boundary conditions only ensure planar nematic directors on the boundary surfaces and not necessarily in the interior of the prism. One potential scenario is that we study NLCs with negative dielectric anisotropy inside the three-dimensional prisms, and apply an electric field in the transverse direction or normal direction to the polygon cross-section. The negative dielectric anisotropy coerces the NLC director to be orthogonal to the applied electric field. The NLC director will then relax into the plane of the polygon and we conjecture that the director remains in the plane of the polygon, after the field is removed. The second question concerns the choice of the boundary conditions - can we realistically fix the boundary conditions on the top and bottom surfaces to be specified rLdG critical points on the polygon cross-section. This is unclear but it is possible that for NLC materials with negative dielectric anisotropy, the system will relax into stable rLdG equilibria on the top and bottom surfaces when the applied electric field is removed i.e. the boundary conditions would correspond to stable rLdG equilibria on the polygon cross-section subject to tangent boundary conditions on the polygon edges as studied in \cite{han2020reduced, han2021solution}. This would correspond to the $D$ and $R$ solutions on a cuboid, or the $Para$-solutions on a prism with a hexagonal cross-section etc.

Labelling the Dirichlet boundary conditions on the bottom (top) prism surfaces by $\P^b$ ($\P^t$), we investigate the following question - can a 3D rLdG critical point for which the nematic director is planar, but depends on all three spatial coordinates, be constructed from a pathway between $\P^b$ and $\P^t$ on the 2D solution landscape? In other words, can we use pathways on the 2D rLdG solution landscapes on polygons to construct 3D rLdG critical points on prisms, with a polygon cross-section. The answer is affirmative, but not every 2D pathway corresponds to a 3D rLdG critical point and equally, there are 3D rLdG critical points that cannot be constructed from 2D pathways on 2D solution landscapes. There are hidden, subtle compatibility conditions that determine the configuration and the index of the 3D rLdG critical point. For example, we choose $\P^b$ and $\P^t$ to be two $D$ solutions on a cuboid, and we observe the unstable $WORS$-texture in the cuboid interior, which would not be possible in 2D settings. We also work with examples for which $\P^b$ and $\P^t$ are higher energy or unstable rLdG critical points on the 2D prism cross-section, and in these cases, we observe multistability in certain geometrical regimes i.e. when the prism cross-sectional dimensions and the prism height are sufficiently large. Multistability refers to multiple stable 3D equilibria on prisms, all of which maybe relevant for experiments, and these multiple equilibria are distinguished by defect lines running across the prism interior (along which the nematic director cannot be defined). We propose that one could use optical tweezers to manipulate the defect lines and induce transitions between the multiple equilibria, akin to the experimental situations reported in \cite{vskarabot2014manipulation}. We hope that the examples in this paper can be informative for future studies of this challenging problem.

In Section~\ref{sec:theory}, we describe the theoretical framework in detail. In Section~\ref{sec:cuboid}, we focus on the cuboid and use a combination of asymptotic and numerical methods to study 3D rLdG critical points to show how multistability can be tailored by the square edge length and prism height. We use different combinations of $(\P^b, \P^t)$ to illustrate the effects of the boundary conditions on the solution landscapes. In Section~\ref{sec:prism}, we generalise these results to a hexagonal prism and conclude with some perspectives in Section~\ref{sec:conclusions}.

\section{Theoretical framework}
\label{sec:theory}
The Landau-de Gennes (LdG) theory is one of the most powerful continuum theories for nematic liquid crystals (NLCs) in the literature ~(Section 2.1, 2.3, and 3.1 in \cite{dg}). It describes the nematic state by the LdG $\Q$-tensor order parameter, which is a macroscopic measure of the material anisotropy or directionality. Mathematically speaking, the $\Q$-tensor is a symmetric traceless $3\times 3$ matrix, $\Q = \sum \limits_{i = 1}^3 \lambda_i \mathbf{e}_i\otimes \mathbf{e}_i$, where the eigenvectors, $\mathbf{e}_i$, describe the preferred material directions or preferred directions of averaged molecular alignment, and the corresponding eigenvalues, $\lambda_i$, measure the degree of orientational order about the corresponding $\mathbf{e}_i$. The nematic phase is said to be (i) isotropic if~$\Q=0$, (ii) uniaxial if $\Q$
has a pair of degenerate non-zero eigenvalues (and one distinguished eigendirection with the non-degenerate eigenvalue) and (iii) biaxial if~$\Q$ has three distinct eigenvalues~(Section 2.1.2 in \cite{dg}).

In the absence of surface energies, a particularly simple form of the LdG energy is given by
\begin{equation}
    I_{LdG}[\Q]:=\int_{\Omega} \frac{L}{2}|\nabla\Q|^2 + f_B\left(\Q\right) \mathrm{dV},\label{eq:3Denergy}
\end{equation}
where the elastic energy density and the bulk energy density are given by:
\begin{equation}
    |\nabla\Q|^2:=\sum_{i,j = 1}^{3} Q_{ij,x}^2 +  Q_{ij,y}^2 +  Q_{ij,z}^2, f_B\left(\Q\right):=\frac{A}{2}tr\Q^2-\frac{B}{3}tr\Q^3+\frac{C}{4}\left(tr\Q^2\right)^2,
    \label{eq:fB}
\end{equation}
$\Omega \subset \mathbb{R}^3$ is the three-dimensional domain, $tr$ is the notation for trace, the variable $A = \alpha(T-T^*)$ is a rescaled temperature; $\alpha, L, B, C$ are positive material-dependent constants and $T^*$ is the characteristic nematic supercooling temperature. We employ the one-constant approximation for the elastic energy density, for which all spatial deformations are equally energetically expensive, so that $| \nabla \Q |^2 = \sum_{i,j = 1}^{3}Q_{ij,x}^2 +  Q_{ij,y}^2 +  Q_{ij,z}^2$. The rescaled temperature $A$ has three characteristic values:(i) $A = 0$, below which the isotropic phase $\mathbf{Q} = 0$ loses stability, (ii) the nematic-isotropic transition temperature, $A = B^2/27C$, at which $f_B$ is minimized by the isotropic phase and a continuum of uniaxial states with $s = s_+ = B/3C$ and n arbitrary, and (iii) the nematic superheating temperature, $A = B^2/24C$ above which the isotropic state is the unique critical point of $f_B$. For a given low temperature $A<0$ (temperature $T<T^*$), the minima of the bulk potential, $f_B$, belong to the set $\mathcal{N}:=\left\{\Q\in \mathcal{M}^{3\times 3}: Q_{ij} = Q_{ji}, Q_{ii} = 0, \Q = s_+(\n\otimes\n - \I/3)\right\}$, where 
\begin{equation}\label{eq:s+}
s_+ = \frac{B+\sqrt{B^2 - 24AC}}{4C}
\end{equation} 
and $\n\in \mathcal{S}^2$ arbitrary. In other words, the vacuum manifold $\mathcal{N}$ is a continuum of uniaxial $\Q$-tensors with constant eigenvalues determined by $s_+$ in (\ref{eq:s+}).

The physically observable configurations are modelled by local or global energy minimisers in an appropriately defined admissible space. The non energy-minimising critical points of \eqref{eq:3Denergy} are equally important, since they connect the energy minimisers on the solution landscape, and often dictate the non-equilibrium dynamics and selection of the energy minimiser for multistable systems. To define the degree of instability, we introduce the Morse index. The Morse index of a saddle point of a energy functional is the number of negative eigenvalues of the Hessian of the energy functional about the critical point \cite{milnor1969morse}. Stable critical points have index-$0$ i.e. they have no unstable eigendirections, whereas unstable index-$k$ saddle points have $k$-unstable eigendirections in the solution landscape.

To this end, we take our 3D domain to be $V_K = E_K\times[-\lambda h,\lambda h]$, which is a prism of height $2\lambda h$ and a regular polygonal cross-section $E_K$, with edge length $\lambda$. The parameter, $h$, is the ratio of the height to the width of $V_K$. When $K = 4$, the square domain $E_4 = [-\lambda,\lambda]^2$ with four vertices at $w_1 = (\lambda, \lambda)$, $w_2 = (-\lambda, \lambda)$, $w_3 = (-\lambda, -\lambda)$ and $w_4 = (\lambda, -\lambda)$, otherwise $E_K$ is a $K$-regular polygon with $K$ edges, centered at the origin with vertices at $w_k = (\lambda cos(2\pi(k-1)/K), \lambda sin(2\pi(k-1)/K))$, $k = 1,...,K$.

We non-dimensionalize the system as, $(\bar{x},\bar{y},\bar{z}) =  \left(\frac{x}{\lambda}, \frac{y}{\lambda}, \frac{z}{\lambda h} \right)$,
\begin{equation}
\label{eq:non-dimensionalized}
    F_0[\Q] := \int_{\bar{V}_K} \left(\frac{1}{2}\left|\nabla_{\bar{x}\bar{y}}\Q\right|^2 + \frac{1}{2h^2}\left|\Q_{,\bar{z}}\right|^2 + \frac{\lambda^2}{L} f_B\left(\Q\right)\right) \mathrm{d\overline{V}}
\end{equation}
where $\overline{V}_K:=\overline{E}_K\times [-1,1]$, with unit polygonal cross-section in xy-plane, $\overline{E}_K$, and $\nabla_{\bar{x}\bar{y}}\mathbf{Q} = (\mathbf{Q}_{\bar{x}},\mathbf{Q}_{\bar{y}})^T$. In the following, the bar is omitted for convenience.

Following the work in \cite{han2020reduced,han2021solution}, we set $B= 0.64\times10^4 N/m^2$, and $C= 0.35\times10^4 N/m^2$ \cite{newtonmottram} and work at a fixed low temperature, $A = -B^2/(3C)$. In \cite{canevari2017order}, the authors show that for $A = -B^2/3C$, the LdG free energy admits a family of critical points, $\Q_c$, with a fixed eigenvector $\zhat$ and a constant eigenvalue $-B/3C$ associated with $\zhat$, and hence, $\Q_c$ has only two degrees of freedom. In other words, for this special temperature, $A = -B^2/3C$, the LdG free energy has a family of critical points on polygonal prisms, $V_K$, defined by
\begin{equation} \label{eq:Qc}
\Q_c = \P - \frac{B}{3C} (2\zhat \otimes \zhat - \xhat \otimes \xhat - \yhat \otimes \yhat )
\end{equation}
where $\P$ is a symmetric traceless $2 \times 2$ matrix (the entries in the third row and column are zero), and $\P$ is a critical point of the rLdG energy defined below:
\begin{equation}\label{p_energy}
    F[\P]: = \int_{V_K}\frac{1}{2}\left|\nabla_{xy} \P\right|^2+\frac{1}{2h^2}\left|\P_{,\bar{z}}\right|^2 + \frac{\bar{\lambda}^2}{2C}\left(-\frac{B^2}{4C}tr\P^2+\frac{C}{4}\left(tr\P^2\right)^2\right) \mathrm{dV},
\end{equation}
and $\bar{\lambda}^2 = \frac{2C\lambda^2}{L}$. The energy (\ref{p_energy}) is simply the LdG energy \eqref{eq:non-dimensionalized} of the specific branch of critical points in \eqref{eq:Qc}. We refer to the $\P$-eigenvector with the largest positive eigenvalue as the ``nematic director" in the plane.
We drop the bar over $\lambda$ for the rest of the manuscript.


This manuscript focuses on the relationship between LdG critical points on 3D polygonal prisms, $V_K$, and solutions landscapes for the rLdG model on regular polygons, $E_K$, and hence, we use the temperature, $A = -B^2/ 3C$, as employed in our previous 2D work in \cite{han2020reduced,han2021solution}, which also allows for direct comparisons between the results in 2D and 3D respectively. In fact, the critical points of the rLdG energy on $E_K$ are simply $z$-invariant critical points of \eqref{p_energy} on $V_K$. The authors have extensively studied solution landscapes of the rLdG model on regular polygons, $E_K$, in a batch of papers \cite{han2020reduced, han2021solution}, in terms of the reduced LdG tensors, $\P$-matrices in \eqref{eq:Qc}.
In this reduced description, there are two degrees of freedom to describe the nematic director in the plane of the polygon and the degree of order about this direction respectively. In \cite{han2020reduced, han2021solution}, the authors compute pathways between competing energy minimisers on polygons, and the pathway is mediated by saddle points or unstable critical points of the rLdG energy. It is interesting to investigate whether these 2D pathways can be used to construct critical points of the 3D LdG energy in \eqref{eq:3Denergy}, on 3D prisms with a polygonal cross-section i.e. if we can stack the different 2D critical points on a 2D pathway to construct a 3D critical point on a 3D domain and if there are algorithms for using the 2D critical points as building blocks for self-assembling 3D structures? In fact, not every 2D pathway can be used to construct a 3D critical point of \eqref{eq:3Denergy} and this raises interesting questions about the compatibility of 2D critical points for 3D studies.

The Dirichlet boundary conditions on the top and bottom surfaces of $V_K$ are taken to be
\begin{gather}\label{tb}
\P = \P^{b}(x, y)\ on\ z = -1;\ \P = \P^{t}(x, y)\ on\ z = 1;
\end{gather}
where $\P^{t}$ and $\P^{b}$ are solutions of
\begin{align}\label{eq:top_bottom}
        \Delta_{xy}P_{11} &= \lambda^2\left(P_{11}^2+P_{12}^2-\frac{B^2}{4C^2}\right)P_{11},\nonumber \\
        \Delta_{xy}P_{12}&= \lambda^2\left(P_{11}^2+P_{12}^2-\frac{B^2}{4C^2}\right)P_{12}.
\end{align}
on $E_K$, where $\Delta_{xy} = (\cdot)_{,xx} + (\cdot)_{,yy}$, i.e. $\P^t$ and $\P^b$ are critical points of the rLdG energy on the cross-section $E_K$, which could be identified with the end-points of a 2D pathway on the rLdG solution landscape on $E_K$. For example, for $V_4$, $\P^{t}$ and $\P^{b}$ could be $D$ or $T$ solutions as reported in our previous work \cite{yin2020construction}, and $Meta$ on pentagon ($V_5$), and $Para$ and $Tri$ states on hexagon ($V_6$) in \cite{han2020reduced} (see Fig. \ref{2D_solution}).
It is noticeable that the solutions with natural boundary condition on the top and bottom surfaces as studied in \cite{canevari2020well}, do not necessarily satisfy the Dirichlet boundary conditions above.

We impose a Dirichlet boundary condition, $\P_l$, on the lateral surfaces of $V_K$: 
\begin{equation} \label{BC}
 \P\left(x, \, y, \, z\right) = \P_l\left(x, \, y\right) \qquad \textrm{for }
 \left(x, \, y\right)\in\partial E_K, \ z\in[-1, \, 1]
\end{equation}
However, there is a necessary mismatch at the corners/vertices.
We define the distance between a point on the lateral surface $(w,z)$ and the vertical edges $(w_k,z)$, for any $z\in[-1,1]$ as
\begin{equation}
    dist\left(w\right) = min\{||w-w_k||_2,k = 1,...,K\},\ (w,z)\ on\ \partial E_K\times[-1,1].\nonumber
\end{equation}
On a cuboid, we define the tangential Dirichlet boundary condition $\P = \P_l$ on lateral surfaces, away from the vertical edges to be
\begin{equation}
    \begin{aligned}
    &P_{11l}\left(w\right) =  \begin{cases}
    -\frac{B}{2C},\ dist\left(w\right)>\epsilon, w\ on\ x = \pm 1\,\\
    \frac{B}{2C},\ dist\left(w\right)>\epsilon, w\ on\ y = \pm 1,\\
    \end{cases}
    &P_{12l}\left(w\right) = 0, w\ on\ \partial E_4,
    \end{aligned}
    \label{Pl}
\end{equation}
   where $B/C$ is the value of $s_+$ in \eqref{eq:s+} for $A = -B^2/(3C)$, $0<\epsilon \ll 1$ is the size of mismatch region. For any other prism $V_K$, the same principle applies for defining the tangential Dirichlet boundary condition on lateral surfaces, and we omit it here. This lateral boundary condition is compatible with any stacks of 2D solutions i.e. solutions of \eqref{eq:top_bottom}.

\begin{figure}
    \begin{center}
        \includegraphics[width=0.8\columnwidth]{./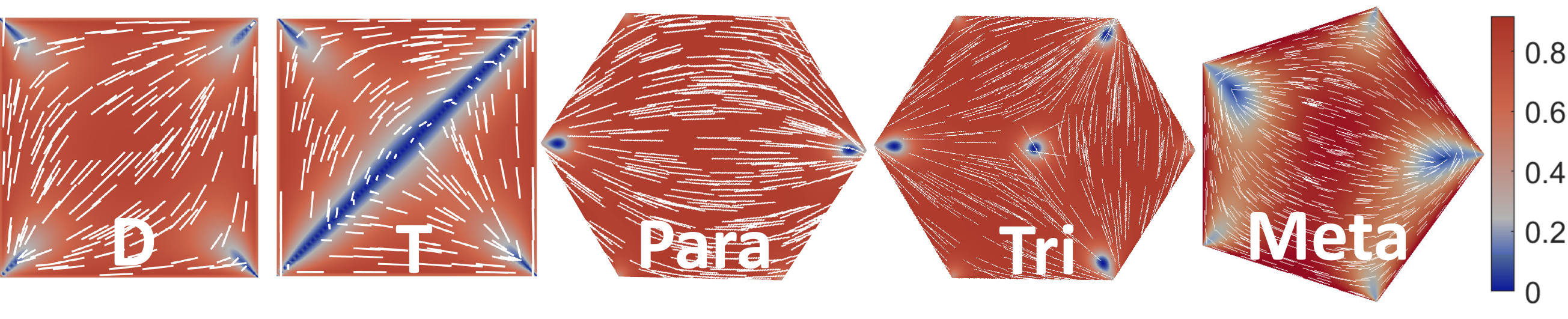}
        \caption{The profiles of 2D solutions of \eqref{eq:top_bottom} $D$ and $T$, on square with $\lambda^2=30$, $Meta$ on pentagon with $\lambda^2 = 30$, $Para$ and $Tri$ on hexagon with $\lambda^2=600$. The vector $(\cos(arctan(P_{12}/P_{11})/2),\sin(arctan(P_{12}/P_{11})/2))$ is the nematic director, and is plotted in terms of the white lines and the order parameter $\sqrt{P_{11}^2 + P_{12}^2}$ is represented by color from blue to red. }
        \label{2D_solution}
    \end{center}
\end{figure}

We take the admissible space to be
\begin{gather} \label{Ao}
    \mathcal{A}_0:=\{(P_{11},P_{12})\in W^{1,2}(V_K;\mathbb{R}^2) : \P=\P_l,\ \partial E_K\times[-1,1],\ \P = \P^{t},\ on\ z=1,\ \P = \P^{b},\ on\ z=-1\},
\end{gather} and the corresponding critical points, $\P(x,y,z)$ are solutions of the corresponding Euler--Lagrange equations:
\begin{align}\label{eq:EL}
        \Delta_{xy}P_{11} + \frac{1}{h^2}\Delta_z P_{11}&= \lambda^2\left(P_{11}^2+P_{12}^2-\frac{B^2}{4C^2}\right)P_{11}, \nonumber\\
        \Delta_{xy}P_{12} + \frac{1}{h^2}\Delta_z P_{12}&= \lambda^2\left(P_{11}^2+P_{12}^2-\frac{B^2}{4C^2}\right)P_{12}.
\end{align}
where $\Delta_{z} = (\cdot)_{,zz}$.
In what follows, we identify the defect set with the nodal set of solutions of \eqref{eq:EL} above. This is the set of no planar order i.e. for points in the nodal set of $\P$, the nematic director is not defined in the $(x,y)$-plane. This definition is widely employed for rLdG approaches as in \cite{han2020reduced}, \cite{han2021solution}. 

In the next proposition, we prove some basic existence and uniqueness results for critical points of \eqref{p_energy} in $\mathcal{A}_0$, before studying specific examples on cuboids and other generic prisms.
\begin{proposition}\label{unique}
For any $h$ and $\lambda$, there exist solutions of the Euler--Lagrange equations in \eqref{eq:EL}, in the admissible space $\mathcal{A}_0$ in \eqref{Ao}. For $h<h_0 = \frac{C}{2 B\lambda}$ or $\lambda<\lambda_0 = \frac{C}{2 Bh}$, the solution is unique, where $h$ and $\lambda$ are dimensionless/re-scaled measures of the prism height and cross-section dimension.
\end{proposition}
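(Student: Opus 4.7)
\emph{Proof plan.} For existence I would apply the direct method of the calculus of variations to the reduced LdG energy $F[\P]$ in (\ref{p_energy}) over the admissible set $\mathcal{A}_0$. The first check is non-emptiness of $\mathcal{A}_0$: since $\P^{t}$ and $\P^{b}$ solve (\ref{eq:top_bottom}) on $E_K$ and both match $\P_l$ on $\partial E_K$, an affine-in-$z$ interpolant such as $\P(x,y,z)=\tfrac{1+z}{2}\P^{t}(x,y)+\tfrac{1-z}{2}\P^{b}(x,y)$ lies in $W^{1,2}(V_K;\mathbb{R}^2)$ and realises all three Dirichlet data. Coercivity of $F$ on $\mathcal{A}_0$ follows because the bulk quartic is bounded below by an explicit constant depending only on $B,C,\lambda,|V_K|$, so a minimising sequence is uniformly bounded in $W^{1,2}$. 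Weak lower semicontinuity of the quadratic elastic term is standard; strong $L^4$ convergence of a weakly convergent subsequence, obtained via the compact embedding $W^{1,2}(V_K)\hookrightarrow L^4(V_K)$, passes the quartic bulk term to the limit; and continuity of the trace operator preserves the Dirichlet data. A minimiser thus exists in $\mathcal{A}_0$ and is a weak solution of (\ref{eq:EL}).

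For uniqueness I would let $\P^{(1)},\P^{(2)}\in\mathcal{A}_0$ be two solutions and set $\mathbf{W}:=\P^{(1)}-\P^{(2)}$. Both share identical Dirichlet data on the full boundary $\partial V_K$, so $\mathbf{W}\in W^{1,2}_0(V_K;\mathbb{R}^2)$. Subtracting the two copies of (\ref{eq:EL}), testing against $\mathbf{W}$, and integrating by parts yields
\begin{equation*}
\int_{V_K}\!\!\Bigl(|\nabla_{xy}\mathbf{W}|^2+\tfrac{1}{h^2}|\mathbf{W}_{,z}|^2\Bigr)\mathrm{dV}+\lambda^2\!\int_{V_K}\!\bigl(|\P^{(1)}|^2\P^{(1)}-|\P^{(2)}|^2\P^{(2)}\bigr)\cdot\mathbf{W}\,\mathrm{dV}=\frac{\lambda^2 B^2}{4C^2}\!\int_{V_K}\!|\mathbf{W}|^2\,\mathrm{dV}.
\end{equation*}
The cubic term is pointwise non-negative by the monotonicity identity
\begin{equation*}
\bigl(|\P^{(1)}|^2\P^{(1)}-|\P^{(2)}|^2\P^{(2)}\bigr)\cdot\mathbf{W}=\int_0^1\!\bigl(2(\P_t\cdot\mathbf{W})^2+|\P_t|^2|\mathbf{W}|^2\bigr)\,dt,\qquad \P_t:=t\P^{(1)}+(1-t)\P^{(2)},
\end{equation*}
so it may be discarded. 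A one-dimensional Poincar\'e inequality in $z$, valid because $\mathbf{W}$ vanishes at $z=\pm 1$, converts the remaining anisotropic Dirichlet form into a pure $L^2$ bound; for $h\lambda<C/(2B)$ the factor $\tfrac{1}{h^2}$ dominates $\tfrac{\lambda^2 B^2}{4C^2}$ after absorbing the Poincar\'e constant, so the inequality is inconsistent unless $\mathbf{W}\equiv 0$. The symmetric roles of $h$ and $\lambda$ in the product $h\lambda$ deliver the two equivalent thresholds $h<h_0$ and $\lambda<\lambda_0$ stated in the proposition.

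\emph{Main obstacle.} The existence half is essentially routine once $\mathcal{A}_0\neq\emptyset$ is verified. The real subtlety sits in the uniqueness step: one must exploit the monotonicity of the cubic nonlinearity to neutralise the genuinely nonlinear coupling between $\P^{(1)}$ and $\P^{(2)}$, and then balance the anisotropic elastic form $|\nabla_{xy}\mathbf{W}|^2+h^{-2}|\mathbf{W}_{,z}|^2$ against the destabilising negative-mass contribution $\tfrac{\lambda^2 B^2}{4C^2}|\mathbf{W}|^2$ with a Poincar\'e constant chosen so that the clean product threshold $h\lambda<C/(2B)$ is recovered. A useful auxiliary fact in controlling the cubic term is the maximum-principle bound $|\P|^2\leq B^2/(4C^2)$ inherited from the boundary data, which keeps all solutions inside the well of the bulk potential.
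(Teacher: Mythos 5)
Your proposal is correct, and the existence half is essentially the paper's argument (direct method, non-empty admissible class via an explicit competitor, coercivity, weak lower semicontinuity of the convex elastic part). The uniqueness half, however, takes a genuinely different route. The paper does not subtract the two Euler--Lagrange equations: it first derives an a priori maximum-principle bound $|\P|^2\leqslant B^2/(2C^2)$ valid for every critical point, restricts to the convex set $\mathcal{S}$ of admissible fields obeying that bound, and proves that the energy is \emph{strictly convex} on $\mathcal{S}$ by estimating the midpoint defect $f_b(\tfrac{\P+\bar{\P}}{2})-\tfrac12 f_b(\P)-\tfrac12 f_b(\bar{\P})$ pointwise (using the $L^\infty$ bound to control the quartic terms) and then invoking the same one-dimensional Poincar\'e inequality in $z$, $\Vert \P-\bar{\P}\Vert_{L^2}^2\leqslant 4\Vert(\P-\bar{\P})_{,z}\Vert_{L^2}^2$, to beat the negative-mass contribution by the $\tfrac{1}{8h^2}\Vert(\P-\bar{\P})_{,z}\Vert_{L^2}^2$ term. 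Your approach instead tests the difference of the two equations against $\mathbf{W}=\P^{(1)}-\P^{(2)}$ and discards the cubic term by the pointwise monotonicity identity you state (which is correct: $\tfrac{d}{dt}(|\P_t|^2\P_t)\cdot\mathbf{W}=2(\P_t\cdot\mathbf{W})^2+|\P_t|^2|\mathbf{W}|^2\geqslant 0$). This buys you something: no $L^\infty$ bound is needed at all (your closing remark that the maximum principle is "useful" for the cubic term is actually superfluous in your own scheme), and with the Poincar\'e constant $4$ your computation in fact yields uniqueness under the weaker condition $h\lambda<C/B$, strictly containing the stated threshold $h\lambda<C/(2B)$. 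The two arguments are of course two faces of the same coin (strict convexity of the functional versus strict monotonicity of its gradient), and both hinge on the identical structural point that the sole destabilising term $-\tfrac{\lambda^2B^2}{4C^2}|\mathbf{W}|^2$ is absorbed by $\tfrac{1}{h^2}|\mathbf{W}_{,z}|^2$ via the $z$-Poincar\'e inequality; what the paper's convexity route buys in exchange is that it is lifted almost verbatim from the existing LdG literature it cites, whereas yours is the more elementary and self-contained PDE energy estimate.
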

\begin{proof}
Our proof is analogous to Theorem 2.2 in \cite{bauman2012analysis}. Consider the LdG energy \eqref{p_energy} in terms of the two independent components, $P_{11}$ and $P_{12}$ of the reduced $\P$-tensor,
\begin{align}
    J[P_{11},P_{12}]:=&\int_{V_K} f_{el}(P_{11},P_{12})+f_b(P_{11},P_{12})\,\mathrm{dV},\label{funcq123}
\end{align}
where
\begin{gather}
    f_{el}(P_{11},P_{12}):=|\nabla_{xy} P_{11}|^2+|\nabla_{xy} P_{12}|^2+\frac{1}{h^2}|P_{11,z}|^2+\frac{1}{h^2}| P_{12,z}|^2,
\end{gather}
and
\begin{equation}\label{eq:fb}
    f_b(\P):=\frac{\lambda^2}{2C}\left(-\frac{B^2}{4C}|\mathbf{P}|^2 + \frac{C}{4}|\mathbf{P}|^4\right),
\end{equation}
are the elastic and thermotropic bulk energy densities, respectively. We prove the existence of minimizers of $J$ in the admissible class $\mathcal{A}_0$; the minimisers are necessarily solutions of (\ref{eq:EL}).
Since the boundary conditions are piece-wise of class $C^1$, the admissible space $\mathcal{A}_0$ is non-empty. $J$ is coercive in $\mathcal{A}_0$  since $|\nabla \P|^2$ is coercive. 
Finally, it suffices to note that $J$ is weakly lower semi-continuous on $W^{1,2}(V_K)$, which follows immediately from the fact that $f_{el}$ is quadratic and convex in $\nabla(P_{11},P_{12})$. 
Thus, the direct method in the calculus of variations yields the existence of a global minimizer of the functional $J$ in the space of finite-energy vectors, $(P_{11},P_{12})\in W^{1,2}(V_K; \mathbb{R}^2)$, satisfying the boundary conditions (\ref{BC}) and (\ref{tb})  (Theorem 2 in Section 8.2.2 of \cite{Evans49}). 
The semilinear elliptic system (\ref{eq:EL}) is simply the system of Euler-Lagrange equations associated with $J$, and the minimizers for $J$ are $C^{\infty}(V_K)\cap C^2(\overline{V_K})$ solutions of (\ref{eq:EL}). The minimising $\P$-tensor is an exact solution of the LdG Euler-Lagrange equations (\ref{eq:EL}).

We adapt the uniqueness criterion argument in Lemma 8.2 of \cite{lamy2014}. 
For any $B$, $C>0$ and $h>0$, if $(\P_{11},\P_{12})\in \mathcal{A}_0$ is a critical point of the rLdG energy \eqref{p_energy}, then $\P$ is bounded.
This is an immediate consequence of the maximum principle. We replace the operator $\nabla$ with $\mathcal{L}_h(.) = (\nabla_{xy}(.),\frac{1}{h}(.)_{,z})^T$ and following the calculations in the Lemma B.3. of \cite{lamy2014}, we have $|\P|^2 \leqslant \frac{B^2}{2C^2}$. We define the convex set $\mathcal{S}=\{(P_{11},P_{12})\in \mathcal{A}_0, |\P|^2\leqslant \frac{B^2}{2C^2}\}$. 

Then, we can prove that the functional $E$ is strictly convex on $\mathcal{S}$.
For any $\P, \bar{\P} \in \mathcal{S}$, we have
\begin{equation}
    \begin{aligned}
    &E(\frac{\P+\bar{\P}}{2})-\frac{1}{2}E(\P)-\frac{1}{2}E(\bar{\P}) \\
    &= \int_{V_K} -\frac{1}{8}|\nabla_{xy}(\P-\bar{\P})|^2  - \frac{1}{8h^2}|(\P-\bar{\P})_{,z}|^2 \mathrm{d}V + \int_{V_K} f_b(\frac{\bar{\P}+\P}{2})-\frac{1}{2}f_b(\P)-\frac{1}{2}f_b(\bar{\P})\mathrm{d}V.
    \end{aligned}
\label{E_convex}
\end{equation}
where $f_b(\P)$ is the bulk energy density in  \eqref{eq:fb}.
For any point $(\hat{x},\hat{y},\hat{z})\in V_K$, we have
\begin{equation}
     (P_{1i}-\bar{P}_{1i}) (\hat{x},\hat{y},\hat{z}) =\int_{-1}^{\hat{z}}(P_{1i}-\bar{P}_{1i})_{,z}(\hat{x},\hat{y},z)\mathrm{d}z,\ i = 1,2.
\end{equation}
Using the Cauchy-Schwarz inequality, we have
\begin{align}
     &(P_{1i}-\bar{P}_{1i})^2(\hat{x},\hat{y},\hat{z})  = \left(\int_{-1}^{\hat{z}}(P_{1i}-\bar{P}_{1i})_{,z}(\hat{x},\hat{y},z)dz\right)^2\\ &\leqslant |\hat{z}+1| \int_{-1}^{\hat{z}}(P_{1i}-\bar{P}_{1i})_{,z} ^2(\hat{x},\hat{y},z) \mathrm{d}z \leqslant 2\int_{-1}^{1} (P_{1i}-\bar{P}_{1i})_{,z} ^2(\hat{x},\hat{y},z) \mathrm{d}z.
\end{align}
Integrating both sides of the inequality on $V_K$, we have
\begin{align}\label{eq: poincare_pre}
    &\int_{V_K} (P_{1i}-\bar{P}_{1i}) ^2(\hat{x},\hat{y},\hat{z}) \mathrm{d}\hat{V}
    \leq  2\int_{-1}^1\int_{V_K} (P_{1i}-\bar{P}_{1i})_{,z}^2(x,y,z) \mathrm{d}V\mathrm{d}\hat{z} = 4\int_{V_K} (P_{1i}-\bar{P}_{1i})_{,z}^2(x,y,z) \mathrm{d}V,  
\end{align}
i.e. the Poincare inequality
\begin{equation}
    \Vert \P-\bar{\P}\Vert_{L^2(V_K)}^2\leqslant 4\Vert (\P-\bar{\P})_{,z}\Vert_{L^2(V_K)}^2,
    \label{eq: poincare}
\end{equation}
where we define the $L^2$-norm as $\Vert \P\Vert_{L^2(V_K)}= \left({\int_{V_K} |\P|^2}\text{d}V\right)^{\frac{1}{2}}$. The rationale of exchanging the order of integration in \eqref{eq: poincare_pre} follows from the density of $C_0^\infty(V_K)$ in $H_0^1(V_K)$, i.e., we can assume $P_{1i}-\bar{P}_{1i}\in C_0^\infty(V_K)$ \cite{majumdar2010landau}.

We compute an upper bound for the second integral in \eqref{E_convex}. 
\begin{align}
\left|f_b(\frac{\P+\bar{\P}}{2})-\frac{1}{2}f_b(\P)-\frac{1}{2}f_b(\bar{\P})\right| &\leq \frac{\lambda^2B^2}{8C^2}(-\left|\frac{\P+\bar{\P}}{2}\right|^2+\frac{1}{2}|\P|^2+\frac{1}{2}|\bar{\P}|^2)
+\frac{\lambda^2}{8}(-\left|\frac{\P+\bar{\P}}{2}\right|^4+\frac{1}{2}|\P|^4+\frac{1}{2}|\bar{\P}|^4)\nonumber\\
&\leq \frac{\lambda^2B^2}{32C}|\P - \bar{\P}|^2 +\frac{\lambda^2}{8}(-\left|\frac{\P+\bar{\P}}{2}\right|^4+\frac{1}{2}|\P|^4+\frac{1}{2}|\bar{\P}|^4)\nonumber
\end{align}
Since $|\P|^2|\bar{\P}|^2-\left<\P,\bar{\P}\right>^2\geqslant 0$, $|\P|^2,|\bar{\P}|^2\leq\frac{B^2}{2C^2}$, $\left<\P,\bar{\P}\right>\leq|\P||\bar{\P}|\leq \frac{B^2}{2C^2}$, we have
\begin{align}
&-\left|\frac{\P+\bar{\P}}{2}\right|^4+\frac{1}{2}|\P|^4+\frac{1}{2}|\bar{\P}|^4 \nonumber\\
&= \frac{7(|\P|^2+|\bar{\P}|^2)|\P-\bar{\P}|^2+10\left<\P,\bar{\P}\right>|\P-\bar{\P}|^2-16(|\P|^2|\bar{\P}|^2-\left<\P,\bar{\P}\right>^2)}{16}\nonumber\\
&\leq \frac{7(|\P|^2+|\bar{\P}|^2)|\P-\bar{\P}|^2+10\left<\P,\bar{\P}\right>|\P-\bar{\P}|^2}{16}\nonumber\\
&\leq \frac{3B^2}{4C^2}|\P-\bar{\P}|^2.\nonumber
\end{align}
Subsequently, 
\begin{equation}
\left|f_b(\frac{\P+\bar{\P}}{2})-\frac{1}{2}f_b(\P)-\frac{1}{2}f_b(\bar{\P})\right| \leqslant \frac{B^2\lambda^2}{8C^2}|\P-\bar{\P}|^2.
\end{equation}
By using the Poincare inequality in \eqref{eq: poincare}, we have
\begin{equation}
\int_{V_K}f_b(\frac{\P+\bar{\P}}{2})-\frac{1}{2}f_b(\P)-\frac{1}{2}f_b(\bar{\P})dV\leqslant\frac{B^2\lambda^2}{8C^2}\Vert \P-\bar{\P}\Vert_{L^2(V_K)}^2\leqslant \frac{B^2\lambda^2}{2C^2}\Vert (\P-\bar{\P})_{,z}\Vert_{L^2(V_K)}^2.
\end{equation}
Thus, for $h<h_0 = \frac{2C}{B\lambda}$ or $\lambda<\lambda_0 = \frac{2C}{Bh}$, the energy functional in \eqref{p_energy} is strictly convex on $\mathcal{S}$ and has a unique critical point,
since $\forall \P,\ \bar{\P} \in \mathcal{S}$, and $\P \neq \bar{\P}$,
\begin{equation}
    E(\frac{\P+\bar{\P}}{2})-\frac{1}{2}E(\P)-\frac{1}{2}E(\bar{\P})\leqslant - \frac{1}{8h^2}\Vert (\P-\bar{\P})_{,z}\Vert_{L^2(V_K)}^2+ \frac{B^2\lambda^2}{2C^2}\Vert (\P-\bar{\P})_{,z}\Vert_{L^2(V_K)}^2 < 0.
\end{equation}
\end{proof}

When $\lambda$ is small enough, the unique solution of the Euler--Lagrange equation \eqref{eq:top_bottom} on $E_K$ is a given $\P^*$ \cite{han2020reduced}. The boundary conditions on the top and bottom surfaces are solutions of \eqref{eq:top_bottom} on $E_K$, by choice. Hence, there is only one choice for  $\P^t$ and $\P^b$, defined by $\P^t= \P^b = \P^*$, for $\lambda$ sufficiently small. The z-invariant solution, $\P(x,y,z) = \P^*(x,y)$, is also a solution of the 3D Euler--Lagrange equations, \eqref{eq:EL}, on $V_K$. 
From Proposition $1$, the LdG energy has a unique critical point (or solution of \eqref{eq:EL}) on $V_K$, for $\lambda$ sufficiently small, and hence, this unique solution is the $z$-invariant 2D solution, defined by $\P(x,y,z) = \P^*(x,y)$.
We work with $\lambda$ large enough so that we can have $\P^t \neq \P^b$ and study mixed 3D critical points i.e. solutions of \eqref{eq:EL} on $V_K$ with conflicting boundary conditions on $z=\pm 1$. 

\section{The Cuboid, $V_4$}\label{sec:cuboid}

We consider two illustrative examples in this section, for two different choices of $\left(\P^b, \P^t \right)$, using a combination of analytic and numerical methods. For the first example, we take $(\P^b, \P^t) = \left(D1, D2 \right)$, for which the leading eigenvector of $\P$/ nematic director is almost aligned along one of the diagonals of the square cross-section, $E_4$. For $\lambda$ large enough, $D1$ and $D2$ are stable $z$-independent critical points of \eqref{p_energy} on $E_4$, subject to the boundary conditions, $\P_l$ on the square edges. For the second example, we take $(\P^b, \P^t) = \left(T1, T2 \right)$, where there are line defects with $\P^b, \P^t \approx 0$  on $y = x$ (see Fig. \ref{2D_solution}). The two-dimensional $D$ states are always index-$0$ or stable, and the $T$-states are always unstable, with Morse index-$3$ for $\lambda^2 = 30$, and necessarily have higher energy than the $D$ solutions. 
These two examples illustrate the dependence of 3D mixed critical points on the choices of $\P^t$ and $\P^b$, which is interesting since $\P^t$ and $\P^b$ could be experimentally tunable states or boundary effects.
\subsection{Choices of $\P^b$ and $\P^t$: $D1$ and $D2$}
\subsubsection{Small $h$}\label{small_h}
Let $\P^b = D1$ and $\P^t = D2$, so that the nematic director (leading eigenvector of $\P$) is aligned along $y=x$ for $D1$, and along $y = -x$ for $D2$.
We first note that the $\P$-tensors associated with $D1$ and $D2$ are solutions of \eqref{eq:top_bottom} and are related by \begin{equation}\label{12}(P_{11}^{D1},P_{12}^{D1}) = (P_{11}^{D2},-P_{12}^{D2}).
\end{equation} 
The $D1$ and $D2$-states have the reflection symmetry about the square diagonals, $x = y$ and $x = -y$, i.e. 
\begin{align}
&(P_{11}^{D_i}(y,x),P_{12}^{D_i}(y,x)) = (-P_{11}^{D_i}(x,y),P_{12}^{D_i}(x,y)),\label{yx}\\
&(P_{11}^{D_i}(-y,-x),P_{12}^{D_i}(-y,-x)) = (-P_{11}^{D_i}(x,y),P_{12}^{D_i}(x,y)),\ i = 1,2.\label{-y-x} 
\end{align}

For $h$ small enough, the solution of the Euler--Lagrange equation in \eqref{eq:EL} is unique, as in Proposition \ref{unique}. If $(P_{11}(x,y,z),P_{12}(x,y,z))$ is a solution of \eqref{eq:EL}, then so are $(P_{11}(x,y,-z),-P_{12}(x,y,-z))$ (from \eqref{12}), $(-P_{11}(y,x,z),P_{12}(y,x,z))$ (from \eqref{yx}), and $(-P_{11}(-y,-x,z),P_{12}(-y,-x,z))$ (from \eqref{-y-x}). 
Subsequently, on the middle cross-section of $V_4$, $(x,y,0)$ for $(x,y)\in E_4$, since $(P_{11}(x,y,z),P_{12}(x,y,z))=(P_{11}(x,y,-z),-P_{12}(x,y,-z))$, we have $P_{12}(x,y,0) = -P_{12}(x,y,0) \equiv 0$, $(x,y)\in E_4$. Since $P_{11}(x,y,z)= -P_{11}(y,x,z)$, we have $P_{11}(x,x,z) = -P_{11}(x,x,z) = 0$.  Since $P_{11}(x,y,z)= -P_{11}(-y,-x,z)$, we have $P_{11}(x,-x,z) = -P_{11}(x,-x,z) = 0$ for any $z$. Hence, for $h$ small enough, we have $\P(x,x,0) = \P(x, -x, 0) = 0$, with two line defects along the square cross-section on $z=0$. This is strongly reminiscent of the 2D solution of \eqref{eq:top_bottom}, known as the $WORS$ (Well Order Reconstruction Solution) \cite{kraljmajumdar2014} and henceforth, we refer to this mixed critical point as $D1 - WORS-D2$ in the rest of the paper. For $h$ small enough, this is the unique and hence, globally stable critical point of \eqref{p_energy}.

For more general cases, in the $h\to 0$ limit, we  can take a regular  perturbation expansion of $P_{11}$ and $P_{12}$ in powers of $h$ as shown below:
\begin{align}
P_{11}(x,y,z) &= P_{11}^0(x,y,z) + hf^0(x,y,z) + \mathcal{O}(h^2)\label{p11_expansion}\\
P_{12}(x,y,z) &= P_{12}^0(x,y,z) + hg^0(x,y,z) + \mathcal{O}(h^2)\label{p12_expansion}
\end{align}
for some functions $f^0, g^0$ which vanish on the boundary.
Substituting \eqref{p11_expansion} and \eqref{p12_expansion} into the Euler--Lagrange equations \eqref{eq:EL}, and multiplying the equations by $h^2$, we obtain
\begin{align}\label{eq:ELhsmall}
        h^2\Delta_{xy} P_{11} + \Delta_z P_{11}&= h^2\lambda^2\left(P_{11}^2+P_{12}^2-\frac{B^2}{4C^2}\right)P_{11},\\
        h^2\Delta_{xy} P_{12} + \Delta_z P_{12}&= h^2\lambda^2\left(P_{11}^2+P_{12}^2-\frac{B^2}{4C^2}\right)P_{12}.
\end{align}
The leading partial differential equations for $P_{11}^0$, $P_{12}^0$ are given by:
\begin{gather}\label{eq:h_0}
\Delta_z P_{11}^0 = 0,\ \Delta_z P_{12}^0.
\end{gather}
which admit the unique solution:  
\begin{gather}
P_{11}^0 = \frac{1-z}{2}P_{11}^{b} + \frac{1+z}{2}P_{11}^{t},\ P_{12}^0 = \frac{1-z}{2}P_{12}^{b} + \frac{1+z}{2}P_{12}^{t}.
\end{gather}
These expressions hold for all $V_K$ and $\P^b$, $\P^t$.
Up to $\mathcal{O}(h)$, the governing partial differential equations for $f$, $g$ are given by:
\begin{gather}
\Delta_z f= 0,\ \Delta_z g= 0,
\end{gather}
i.e. $f=g\equiv 0$, which means the first order corrections are zero.
The small difference between the limiting solution, $(P_{11}^0,P_{12}^0)$, and the unique solution, $(P_{11},P_{12})$, of the Euler--Lagrange equation for $h = 0.1$, in Fig. \ref{h_0}, indicates that the limiting solution is a good approximation of the unique solution.
\begin{figure}
    \begin{center}
        \includegraphics[width=0.5\columnwidth]{./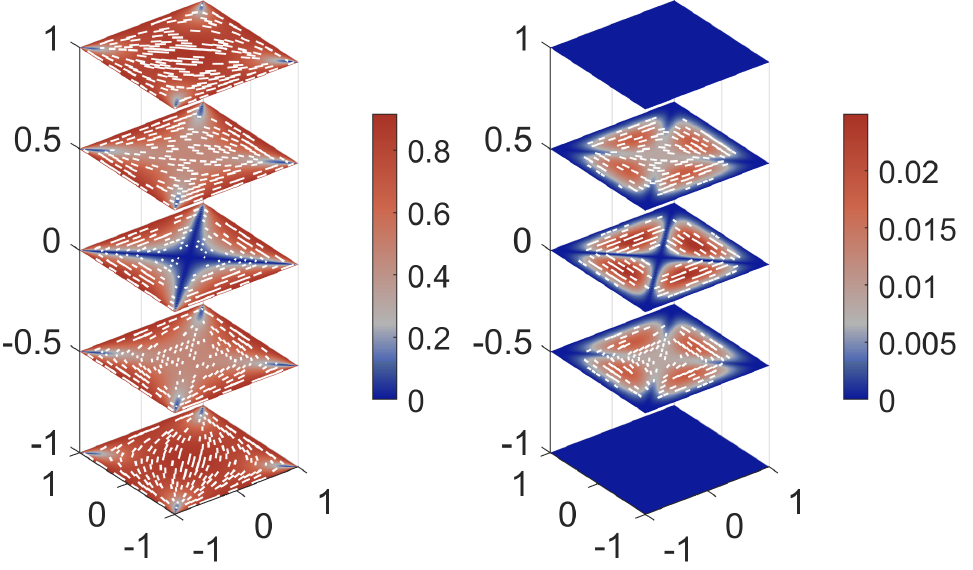}
        \caption{Left: the profiles for $D1-WORS-D2$, the solution of the Euler-Lagrange equation \eqref{eq:h_0}, $(P_{11},P_{12})$ with $h = 0.1$ and $\lambda^2 = 30$. The color represents order parameter $\sqrt{P_{11}^2+ P_{12}^2}$ and the white lines represent the nematic director $(cos(arctan(P_{12}/P_{11})/2),sin(arctan(P_{12}/P_{11})/2))$. Right: the profiles of the difference between the limiting solution as $h\to 0$, $(P_{11}^0,P_{12}^0)$, the solution of \eqref{eq:h_0} and the numerical solution on the left. 
        Let $(d_{11},d_{12}) = (P_{11},P_{12}) - (P_{11}^0,P_{12}^0)$. The white lines represent the vector field, $(\cos(arctan(d_{12}/d_{11})/2),\sin(arctan(d_{12}/d_{11})/2))$ and the color bar denotes the quantity, $\sqrt{d_{11}^2 + d_{12}^2}$. The white lines are drawn when $\sqrt{d_{11}^2 + d_{12}^2}\geq 4e-3$.}
        \label{h_0}
    \end{center}
\end{figure}

\subsubsection{The existence of $D1-WORS-D2$ for all $h$}\label{all_h}
\begin{proposition} \label{prop:forever_critical}
    Let $\lambda$ be large enough so that $D1$ and $D2$ are solutions of \eqref{eq:top_bottom}, and stable $z$-independent critical points of \eqref{p_energy} on a square domain, $E_4$, subject to the boundary conditions, $\P = \P_l$ on the square edges. With $\P^{t} = \P^{D2}$ and $\P^{b} = \P^{D1}$, $D1-WORS-D2$ is a critical point, $(P_{11}^s, P_{12}^s)$, of the energy functional \eqref{p_energy} on the cuboid, $V_4$, in the admissible space 
    $\mathcal{A}_0$ in \eqref{Ao}, for all $h>0$.
\end{proposition}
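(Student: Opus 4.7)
The plan is to construct $D1-WORS-D2$ for arbitrary $h>0$ by a symmetrization argument, i.e., by minimizing $F$ in \eqref{p_energy} over a symmetric subclass of $\mathcal{A}_0$ and then invoking Palais' principle of symmetric criticality to conclude the restricted minimizer is a critical point of $F$ on the full admissible space. The motivation comes directly from Section~\ref{small_h}: the three identities \eqref{12}, \eqref{yx}, \eqref{-y-x} relating $\P^{D1}$ and $\P^{D2}$ suggest three natural involutions on pairs $(P_{11},P_{12})\in W^{1,2}(V_4;\mathbb{R}^2)$, namely
\begin{align*}
\tau_1(P_{11},P_{12})(x,y,z)&:=(P_{11},-P_{12})(x,y,-z),\\
\tau_2(P_{11},P_{12})(x,y,z)&:=(-P_{11},P_{12})(y,x,z),\\
\tau_3(P_{11},P_{12})(x,y,z)&:=(-P_{11},P_{12})(-y,-x,z).
\end{align*}
These generate a finite symmetry group $G$.

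The first step is to verify that the prescribed boundary data are $G$-invariant: $\P^b=\P^{D1}$ and $\P^t=\P^{D2}$ are interchanged correctly by $\tau_1$ thanks to \eqref{12}, and each of them is individually fixed by $\tau_2$ and $\tau_3$ thanks to \eqref{yx}, \eqref{-y-x}; the lateral Dirichlet datum $\P_l$ in \eqref{Pl} is manifestly $G$-invariant because swapping $x\leftrightarrow y$ flips the sign of $P_{11l}$ while $P_{12l}\equiv0$. Consequently
\[
\mathcal{A}_s:=\{\P\in\mathcal{A}_0:\tau_i\P=\P,\ i=1,2,3\}
\]
is well-defined, and is non-empty because the harmonic interpolant $\P^0(x,y,z)=\tfrac{1-z}{2}\P^b+\tfrac{1+z}{2}\P^t$, already encountered in Section~\ref{small_h}, lies in $\mathcal{A}_s$. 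Next I would check that $F$ is $G$-invariant: each $\tau_i$ is either an orthogonal change of the independent variables (with Jacobian $\pm1$) combined with a sign change that preserves $|\nabla_{xy}\P|^2$, $|\P_{,z}|^2$ and the polynomial bulk density $f_b(\P)=f_b(|\P|^2)$. Hence $F[\tau_i\P]=F[\P]$.

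I would then minimize $F$ over $\mathcal{A}_s$ using the direct method, reusing the ingredients already assembled in the proof of Proposition~\ref{unique}: coercivity of the elastic term, weak lower semicontinuity of $F$ on $W^{1,2}(V_4;\mathbb{R}^2)$, and $L^\infty$-boundedness via the maximum principle. Since the $\tau_i$ act as linear isometries on $W^{1,2}$, $\mathcal{A}_s$ is weakly closed and a minimizer $(P_{11}^s,P_{12}^s)\in\mathcal{A}_s$ exists. The principle of symmetric criticality (Palais) then asserts that a critical point of the restriction $F|_{\mathcal{A}_s}$ is automatically a critical point of $F|_{\mathcal{A}_0}$, hence a classical solution of the Euler--Lagrange system \eqref{eq:EL}. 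Finally, the defining symmetries of $\mathcal{A}_s$ force, at $z=0$ and on the diagonals, $P_{12}^s(x,y,0)\equiv0$ and $P_{11}^s(x,\pm x,z)\equiv0$; restricted to the lines $\{y=\pm x,z=0\}$ this yields $\P^s\equiv0$, reproducing the WORS-like defect set and justifying the name $D1-WORS-D2$ for all $h>0$.

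The main obstacle, and the place where I would be most careful, is the application of the symmetric criticality principle in this inhomogeneous Dirichlet setting: one must check that the orbit of $\mathcal{A}_0$ under $G$ is exactly $\mathcal{A}_0$ (which requires the boundary data to be $G$-invariant, as verified above) and that the $\tau_i$ extend to isometries of the affine Hilbert space modelled on $W^{1,2}_0(V_4;\mathbb{R}^2)$, so that variations within $\mathcal{A}_s$ span the $G$-invariant directions and the transverse anti-invariant variations drop out of the Euler--Lagrange equations by an averaging argument. Once this is in place, the existence of $(P_{11}^s,P_{12}^s)$ for every $h>0$ is essentially automatic, and no uniqueness hypothesis on \eqref{eq:EL} is required, in contrast to the $h\to 0$ argument of Section~\ref{small_h}.
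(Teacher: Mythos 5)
Your construction is correct and reaches the same object as the paper, but by a different technical route. The paper works on a fundamental domain for the symmetry group: it minimises the energy on the quadrant $\Omega_q\times[-1,0]$ subject to mixed Dirichlet/Neumann conditions ($P_{11}=\partial_\nu P_{12}=0$ on the diagonal faces, $P_{12}=P_{11,z}=0$ on $z=0$) that encode exactly your involutions $\tau_1,\tau_2,\tau_3$, and then extends to all of $V_4$ by even/odd reflections, invoking the reflection lemmas of Dang--Fife--Peletier to show the reflected configuration is a weak solution of \eqref{eq:EL} across the symmetry planes. You instead stay on the full cuboid, minimise over the fixed-point set $\mathcal{A}_s\subset\mathcal{A}_0$ of the group generated by $\tau_1,\tau_2,\tau_3$, and upgrade the restricted minimiser to a genuine critical point via Palais' principle of symmetric criticality. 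The two are dual formulations of the same symmetrisation: a minimiser of the quadrant problem, reflected, is precisely a minimiser of $F|_{\mathcal{A}_s}$, and vice versa, so both yield the same $(P_{11}^s,P_{12}^s)$ with the forced nodal set $P_{12}^s(x,y,0)\equiv 0$, $P_{11}^s(x,\pm x,z)\equiv 0$. What each buys: the reflection argument is more concrete but must justify regularity of the glued weak solution across the reflection planes (hence the citations to \cite{dangfifepeletier}); your route replaces that with the (elementary, in the Hilbert-space/finite-group setting) observation that the gradient of an invariant functional at a fixed point is itself fixed, so it cannot be purely anti-invariant unless it vanishes — the Neumann-type conditions on the symmetry planes then emerge as natural boundary conditions rather than being imposed. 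Your verifications of the invariance of the data are the right ones; the only points to state explicitly are that the corner-mismatch interpolation of $\P_l$ in \eqref{Pl} must be chosen $G$-invariantly (it can be, since the mismatch regions are symmetric), that the vanishing of $P_{12}^s$ on $z=0$ and of $P_{11}^s$ on the diagonal planes initially holds only in the trace sense and becomes pointwise after elliptic regularity, and that the $\tau_i$ preserve the anisotropic inner product weighted by $1/h^2$ in the $z$-derivative (they do, since none of them mixes $z$ with $x,y$), which is what the isometry hypothesis in Palais' principle requires here.
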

\begin{proof} 
We follow the approach in \cite{canevari2017order}. Consider a quadrant of the square domain, denoted by $\Omega_q$:
\begin{gather}
    \Omega_q:=\{(x,y)\in E_4 : -x<y<x,\ 0<x<1\}.
\end{gather}
The following boundary conditions on $\Omega_q\times[-1,0]$ are consistent with the boundary conditions (\ref{BC}) and (\ref{tb}), on the whole of $V_4$: 
\begin{gather}\label{ANquadrantBCs}
    \begin{cases}
    \P=\P_l,\ (x,y)\in\partial\Omega_q\cap\partial E_4,\ z\in[-1,0]; \\
    \P = \P^{D1},\ (x,y)\in\Omega_q,\ z = -1;\\
    P_{11}=\partial_\nu P_{12} = 0,\ (x,y)\in\{\partial\Omega_q\cap\{y = \pm x\}\}\times[-1,0];\\
    P_{12} = P_{11,z} = 0,\ (x,y)\in\Omega_q,\ z = 0,
    \end{cases}
\end{gather}
where $\partial_\nu$ represents the outward normal derivative in $xy$-plane. The symmetry properties of $D1$ in \eqref{yx} and \eqref{-y-x} imply that $D1$ satisfies the third boundary condition in \eqref{ANquadrantBCs}. We minimize the associated LdG energy functional in $\Omega_q$, given by:
\begin{gather}
J[P_{11},P_{12}]=\int_{\Omega_q \times[-1,0]} f_{el}(P_{11},P_{12})+f_b(P_{11},P_{12})\,\mathrm{dV}, 
\end{gather}
in the admissible space
\begin{gather}
\mathcal{A}_q:=\{(P_{11},P_{12})\in W^{1,2}(\Omega_q\times[-1,0];\mathbb{R}^2): \eqref{ANquadrantBCs}\ \textrm{is satisfied}\}.   
\end{gather}
As the boundary conditions on $\partial V_4$ are continuous and piecewise of class $C^1$, the admissible space, $\mathcal{A}_q$, is non-empty. Furthermore, $J$ is coercive on $\mathcal{A}_q$ and convex in the gradient $\nabla(P_{11},P_{12})$. Thus, by the direct method in the calculus of variations, we are guaranteed the existence of a minimizer $(P_{11}^*,P_{12}^*)\in\mathcal{A}_q$. We define a function $P_{11}^s\in V_4$ by even reflection of $P_{11}^*\in\Omega_q\times[-1,0]$ about the cross-section $z = 0$, and odd reflection of $P_{11}^*\in\Omega_q\times[-1,1]$ about the square diagonals. We do the same for the function $P_{12}^s\in V_4$ defined by odd reflections of $P_{12}^*$ about cross-section $z = 0$, and even reflections of $P_{12}^*$ about the square diagonals. The reflections across the mid-plane, $z=0$, gives us the $D2$ state for $z>0$, as required.
 By repeating the arguments in Lemma 2 and Lemma 3 of \cite{dangfifepeletier}, 
and Theorem 3 of \cite{dangfifepeletier}, 
 the constructed configuration, $(P_{11}^s, P_{12}^s)$, is a weak solution of the associated Euler-Lagrange equation on $V_4$. One can verify that $(P_{11}^s, P_{12}^s)$ is a critical point of $J$ on $\mathcal{A}_0$ with the desired properties.
On the middle cross-section $z=0$, $P_{12}(x,y,0)\equiv 0$, and $P_{11}(x,\pm x, 0) = 0$, so that we have two line-defects with $\P = 0$ on the square diagonals $y = \pm x$, and we have a $WORS$-like configuration on $z=0$, justifying the label, $D1 - WORS-D2$, for this state.
\end{proof}
 
\subsubsection{Instability for large $h$}\label{large_h}
In this section, the LdG energy functional is rescaled with the scaling $\hat{x} = \sqrt{2C/L}x$, $\hat{y} = \sqrt{2C/L}y$, $\hat{z} = \sqrt{2C/L}z/\lambda$,
\begin{equation}\label{eq:denormalized_xyz}
    F_{\lambda h}[\P] := \int_{V_{4\lambda h}} \left(\frac{1}{2}\left|\nabla_{xy}\P\right|^2 + \frac{1}{2\lambda^2}\left|\P_{,z}\right|^2 +  (-\frac{B^2}{8C^2} tr\P^2 + \frac{1}{8}( tr \P^2)^2)\right) \mathrm{dV},
\end{equation}
where $V_{4\lambda h} = E_{4\lambda}\times [-h,h]$ and $E_{4\lambda} = [-\lambda,\lambda]^2$.
The associated second variation of the rLdG energy at a critical point $p_c = (P_{11}^c,P_{12}^c)$, is given by:
\begin{equation}
    \partial^2F_{\lambda h}[\eta] = \int_{V_{4\lambda h}}|\nabla_{xy} \eta|^2 + \frac{1}{\lambda^2}|\eta_{,z}|^2 + \left(|p_c|^2-\frac{B^2}{4C^2}\right)|\eta|^2+2\left(p_c\cdot\eta\right)^2 \mathrm{dV},
    \label{eq:second_2}
\end{equation}
The stability of $\P_c$ is measured by the quantity
\begin{equation}
\mu_{\lambda}(h):=\inf_{\eta\in W^{1,2}_0(V_{4\lambda h})\backslash\{0\}}\frac{\partial^2 F_{\lambda h}[\eta]}{\int_{V_{4\lambda h}}\eta^2}.
\end{equation}

For 3D critical points of the rLdG energy in \eqref{eq:denormalized_xyz}, $p_c$, with $\partial_z p_c(x,y,\pm h) = 0$, 
we can compute an explicit upper bound for the second variation of the rLdG energy about $p_c$ as shown below.
The critical point of $\partial^2 F_{\lambda h}[\eta]$ is a solution of
\begin{equation}\label{eq:eta}
\Delta_{xy}\eta + \frac{1}{\lambda^2}\Delta_z\eta= (|p_c|^2-\frac{B^2}{4C^2})\eta + 2(p\cdot\eta)p.
\end{equation}
We set $\eta^* = p_{c,z}$, which vanishes on $\partial V_{4\lambda h}$ by assumption, and satisfies \eqref{eq:eta}, since $p_c$ satisfies the Euler-Lagrange equations
\begin{align}\label{eq:EL_lambda}
        \Delta_{xy}P_{11} + \frac{1}{\lambda^2}\Delta_z P_{11}&= \left(P_{11}^2+P_{12}^2-\frac{B^2}{4C^2}\right)P_{11}, \nonumber\\ 
        \Delta_{xy}P_{12} + \frac{1}{\lambda^2}\Delta_z P_{12}&= \left(P_{11}^2+P_{12}^2-\frac{B^2}{4C^2}\right)P_{12}.
\end{align}
Subsequently, the integral of the first and second terms in \eqref{eq:second_2} is 
\begin{align}
&\int_{V_{4\lambda h}} |\nabla_{xy} \eta^*|^2  + \frac{1}{\lambda^2}|\eta^*_{,z}|^2 dV = -(\int_{V_{4\lambda h}} \eta^*\Delta_{xy}\eta^*  + \frac{1}{\lambda^2}\eta^*\Delta_z\eta^* dV) = \int_{V_{4\lambda h}}-p_{c,z} (\Delta_{xy} p_c)_{,z} - \frac{1}{\lambda^2}p_{c,z} (\Delta_{z} p_c)_{,z}dV \nonumber\\
&= \int_{V_{4\lambda h}}-p_{c,z} (\left(|p_c|^2-\frac{B^2}{4C^2}\right)p_c)_{,z} dV \nonumber\\
&= \int_{V_{4\lambda h}}-2 (p_c\cdot p_{c,z})^2 - \left(|p_c|^2-\frac{B^2}{4C^2}\right)|p_{c,z}|^2 dV.
\end{align}
Substituting the above equations into \eqref{eq:second_2}, we have
\begin{equation}
\partial^2F_{\lambda h}[\eta^*] = 0.
\end{equation}
Therefore, for any $h$ and $\lambda$, 3D critical points, $p_c$, of \eqref{eq:denormalized_xyz}, that satisfy $p_{c,z}(x,y,\pm h) = 0$, are not stable since the smallest eigenvalue of the corresponding Hessian of the rLdG energy is non-positive, i.e.,
\begin{equation}
    \mu_{\lambda} (h) \leq 0.
\end{equation}

\begin{figure}
    \begin{center}
        \includegraphics[width=1.0\columnwidth]{./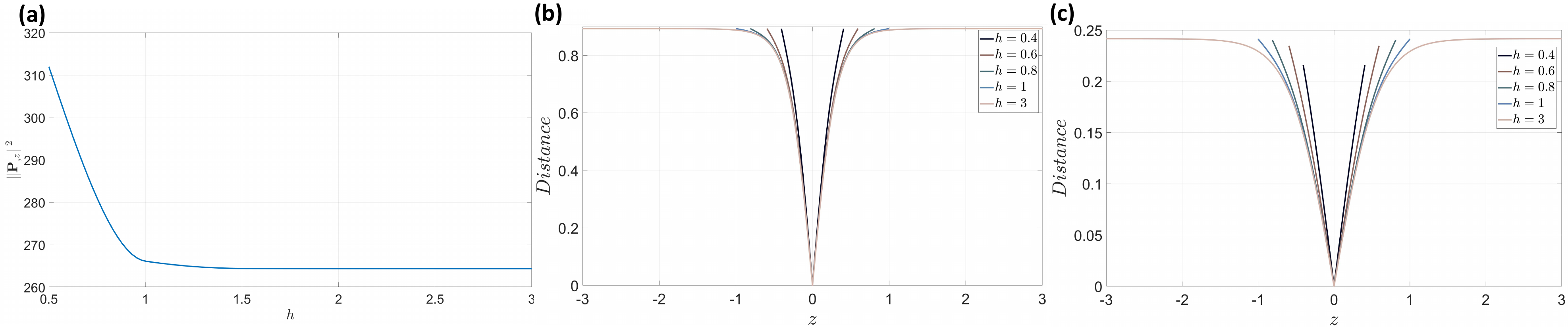}
        \caption{(a) The plot of $||\P_{,z}||^2_{L^2(V_{4\lambda h})}$ vs. $h$, where $\P$ is the critical point of the rLdG energy corresponding to \eqref{eq:denormalized_xyz}, $D1-WORS-D2$ with $\lambda^2 = 30$. In (b) and (c), we plot the scalar order parameter, $Distance = \sqrt{|P_{12}(x,y,z)^2 + 3(P_{11}(x,y,z)^2-P_{11}(x,y,0)^2)|}$, of the numerically computed $D1-WORS-D2$ state, at $(0,0,z)$ and $(0,\lambda/2,z)$, for different values of $h$.}
        \label{D1-WORS-D2}
    \end{center}
\end{figure}

In the following, we consider the stability of the $D1-WORS-D2$ critical point on $V_4$, for large $h$. 
As shown in Fig. \ref{D1-WORS-D2}(a), the integral of $|\partial_z p_c|^2$ on $V_{4\lambda h}$ is bounded, so we have $\partial_z p_c\to 0$ as $h\to\infty$, on the top and bottom surfaces and subsequently $\mu(\infty)\leq 0$. 
In Fig. \ref{D1-WORS-D2}(b), the order $\sqrt{P_{12}(x,y,z)^2 + 3(P_{11}(x,y,z)^2-P_{11}(x,y,0))}$ drops to zero at $(x,y,z) = (0,0,0)$, consistent with the $WORS$-like state with $P_{12}(x,y,0)\equiv 0$ on $z=0$. We observe in Fig. \ref{D1-WORS-D2}(b) and (c) that as $h$ increases, the effect of the middle $WORS$ slice remains confined to a thin transition layer confined to a small neighbourhood of $z=0$, whilst the $D1-WORS-D2$ solution approaches a block-like structure outside this transition layer. Namely, the solution is effectively the $\P^b = D1$ state for $z\in [-h, a)$, a transition state mediated by a $WORS$-type profile for $z\in [a,b]$ followed by the $\P^t = D2$ state for $z\in (b, h]$, for some fixed $a, b$ independent of $h$, as $h\to\infty$. We also note that the $D1$ and $D2$ states, with no interior defects, are more energetically favourable and more stable than the middle $WORS$-slice, which has two diagonal defect lines.

In order to use the proof for the instability of the 2D WORS (for $\lambda$ large) in \cite{schatzman1995stability} and \cite{canevari2017order}, we rotate the square by $45$ degrees, so that $\bar{\P}$ is related to $\P$ by
\begin{equation}\label{rotation}
    \left(\begin{tabular}{cc}
        $\bar{P}_{11}$ & $\bar{P}_{12}$\\
        $\bar{P}_{12}$ & $-\bar{P}_{11}$
    \end{tabular}\right)(\r) = S\P(S^T\r)S^T =
    \left(\begin{tabular}{cc}
        $-P_{12}$ & $P_{11}$\\
        $P_{11}$ & $P_{12}$
    \end{tabular}\right)
    (S^T\r)
\end{equation}
where $S$ is the corresponding rotation matrix. Hence, the condition
$P_{12}(x,y,0)\equiv 0$, for $(x,y)\in [-\lambda,\lambda]^2$ proved in Proposition \ref{prop:forever_critical} translates to $\bar{P}_{11}(x,y,0)=0$, $(x,y)\in S[-\lambda,\lambda]^2$ , and the condition, $P_{11}(x,\pm x,z)\equiv 0$, $x\in[-\lambda,\lambda]$ proved in Proposition \ref{prop:forever_critical} \ translates to $\bar{P}_{12}(x,0,z)\equiv 0$, for $x\in[-\sqrt{2}\lambda,\sqrt{2}\lambda], z\in [-h, h]$ and $\bar{P}_{12}(0,y,z)\equiv 0$ $y\in[-\sqrt{2}\lambda,\sqrt{2}\lambda], z\in[-h,h]$. In the following Remark, we omit the bars over $P_{11}$ and $P_{12}$ for brevity.

\begin{remark}\label{prop:unstable}
Assuming, $\sqrt{|P_{11}(x,y,\sigma)^2+3(P_{12}(x,y,\sigma)^2-P_{12}(x,y,0)^2)|} \sim O(\sigma)$ and 
$xyP_{12,zz}(x,y,0)\leq 0$, for $\lambda\gtrsim 1/\epsilon^2$ with small constant $\epsilon$, if the $D1-WORS-D2$ critical point has the multi-block structure as described above, and $h$ is large enough, it is strictly unstable in the sense that the Hessian of the rLdG energy \eqref{p_energy} about this critical point has a negative eigenvalue. 
\end{remark}
This is not a rigorous proof, but rather a set of heuristic arguments based on numerical estimates. However, it gives a clear physical interpretation of the origin of the instability of the $D1-WORS-D2$ critical point -  an instability localised near the centre of the cuboid that stems from the instability of the $WORS$ on a square domain, for $\lambda$ large enough and for $h$ large enough.

The critical point is $D1-WORS-D2$, labelled by $p = (P_{11},P_{12})$, which is a solution of
\eqref{eq:EL_lambda}.
Consider the second variation of the rLdG energy in \eqref{eq:second_2} and we construct a perturbation $\eta = (\eta_1,\eta_2)$, such that the associated second variation is negative. 
We work with the perturbation
\begin{equation}\label{eq:pert}
\eta(x,y,z) = \tilde{\eta}(x,y)\phi(z),
\end{equation}
where $\phi\in C_0^{\infty}(\mathbb{R})$ is a cutoff function satisfying $|\phi'|\leq C/\sigma$ which is equal to $1$ over $[-\sigma/2,\sigma/2]$ and vanishes outside $[-\sigma,\sigma]$, where $\sigma$ is less than the thickness of jump layer, the region with $s<\frac{1}{2}$ in Fig. \ref{D1-WORS-D2}. 

The 2D perturbation $\tilde{\eta}(x,y)$ is an unstable direction for a 2D WORS solution $\tilde{p} = (\tilde{P}_{11},\tilde{P}_{12})$, which is reported in Lemma 3.4 of \cite{schatzman1995stability}. The components, $\tilde{P}_{11}$ and $\tilde{P}_{12}$, satisfy
\begin{align}
&\tilde{P}_{11}\equiv 0,\ on\ SE_{4\lambda}\\
&\Delta_{xy}\tilde{P}_{12} =  (\tilde{P}_{12}^2-\frac{B^2}{4C^2})\tilde{P}_{12}\ on\ SE_{4\lambda}.
\end{align}
As for the perturbation $\tilde{\eta}$ in \cite{schatzman1995stability}, we assume
\begin{equation}
\tilde{\eta}(x,y) =
\begin{cases}
\psi(x/n)P^{\infty}_{12,y}(|x|,y),\ \text{if}\ |x|\geq \epsilon,\\
P^{\infty}_{12,y}(\epsilon,y)+P^{\infty}_{12,xy}(\epsilon,y)\frac{x^2-\epsilon^2}{2\epsilon},\ \text{if}\ |x|\leq\epsilon.
\end{cases}
\end{equation}
where $\mathbf{P}^{\infty}$ satisfies
\begin{align}
&P^{\infty}_{11}\equiv 0,\ on\ \mathbb{R}^2\\
&\Delta_{xy}P^{\infty}_{12} =  ({P^{\infty}_{12}}^2-\frac{B^2}{4C^2})P^{\infty}_{12}\ on\ \mathbb{R}^2.
\end{align}
$\psi\in C_0^{\infty}(\mathbb{R})$ is a cutoff function, which is equal to $1$ over $[-1,1]$ and vanishes outside $[-2,2]$, and $n$ is a large positive number.

We substitute the perturbation in \eqref{eq:pert}, into the second variation \eqref{eq:second_2},
\begin{align}
    \partial^2F_{\lambda h}[\eta] &= \int_{SV_{4\lambda h}}|\nabla_{xy} \tilde{\eta}|^2\phi^2 + \frac{1}{\lambda^2}(\tilde{\eta}\phi')^2 + \left(P_{11}^2 + 3P_{12}^2-\frac{B^2}{4C^2}\right)(\tilde{\eta}\phi)^2 dV\\
    &= \int_{SV_{4\lambda h}}(|\nabla_{xy} \tilde{\eta}|^2 + \left( 3\tilde{P}_{12}^2-\frac{B^2}{4C^2}\right)\tilde{\eta}^2)\phi^2 dV\\
    &+ \int_{SV_{4\lambda h}} \frac{1}{\lambda^2}(\tilde{\eta}\phi')^2 + \left(P_{11}^2 + 3(P_{12}^2-\tilde{P}_{12}^2)\right)\tilde{\eta}^2\phi^2 dV\\
    &= \int_{SV_{4\lambda h}}(|\nabla_{xy} \tilde{\eta}|^2 + \left( 3\tilde{P}_{12}^2-\frac{B^2}{4C^2}\right)\tilde{\eta}^2)\phi^2 dV \label{eq:2D}\\
    &+ \int_{SV_{4\lambda h}} \frac{1}{\lambda^2}(\tilde{\eta}\phi')^2 + \left(P_{11}^2 + 3(P_{12}^2-P_{12}(x,y,0)^2\right)\tilde{\eta}^2\phi^2 dV\label{eq:z}\\
    &+ \int_{SV_{4\lambda h}}  3\left(P_{12}(x,y,0)^2-\tilde{P}_{12}^2\right)\tilde{\eta}^2\phi^2 dV,
    \label{eq:h}
\end{align}
where $SV_{4\lambda h}$ is the 3D cuboid domain after rotation.
According to the stability analysis of 2D WORS in Lemma 5.5 of \cite{canevari2017order} and Lemma 3.4 of \cite{schatzman1995stability}, the order of the integral in \eqref{eq:2D} is $O(-\epsilon\sigma) + O(\sigma n^{-1})$.
From Corollary 2.9 in \cite{schatzman1995stability}, we infer that for all $\mu\in(0,\lambda_1)$ ($\lambda_1$ is the smallest positive eigenvalue of the spectrum of $H$, where $HP^{\infty}_{12} =
-\Delta_{xy}P^{\infty}_{12} + \left({P^{\infty}_{12}}^2-\frac{B^2}{4C^2}\right)P^{\infty}_{12}$), there exists a constant $c(\mu)$, such that 
\begin{gather}
|P^{\infty}_{12,y}(x,y)|\leq c(\mu)e^{-|y|\mu},\ 
|P^{\infty}_{12,xy}(x,y)|\leq c(\mu)\min(e^{-|x|\mu},e^{-|y|\mu}).
\end{gather}

Hence 
\begin{align}
    &\int_{SE_{4\lambda}} \tilde{\eta}^2 dxdy = \int_{SE_{4\lambda}\cap{|x|\leq\epsilon}} \tilde{\eta}^2 dxdy + \int_{SE_{4\lambda}\cap{|x|\geq\epsilon}} \tilde{\eta}^2 dxdy\\
    \leq
    &\int_{SE_{4\lambda}\cap{|x|\leq\epsilon}} \left(P^{\infty}_{12,y}(\epsilon,y)+P^{\infty}_{12,xy}(\epsilon,y)\frac{x^2-\epsilon^2}{2\epsilon}\right)^2 dxdy+
    \int_{SE_{4\lambda}\cap{|x|\geq\epsilon}} \psi^2(x/n){P^{\infty}}^2_{12,y}(|x|,y) dxdy\\
    \leq& \int_{-\lambda}^{\lambda} 4c^2(\mu)e^{-2|y|\mu}2\epsilon dy + 
    \int_{-2n}^{2n}\int_{-\lambda}^{\lambda} c^2(\mu)e^{-2|y|\mu} dydx\\
    =& O(n),
\end{align}
where $SE_{4\lambda}$ is the $E_{4\lambda}$ after rotation.
The first term in \eqref{eq:z} is 
\begin{equation}
    \int_{SV_{4\lambda h}} \frac{1}{\lambda^2}(\tilde{\eta}\phi')^2 dV
    = \frac{1}{\lambda^2}\int_{SE_{4\lambda}} \tilde{\eta}^2 dxdy \int_{[-h,h]}(\phi')^2 dz
    = O(n\sigma^{-1} \lambda^{-2}).
\end{equation}

The most drastic changes of the nematic order, along the $z$-axis, happen on $(x,y) = (0,0)$. From our numerical result (see Fig \ref{D1-WORS-D2}(b) and (c)), when $h$ is large enough, the term $\sqrt{|P_{11}(x,y,z)^2+3(P_{12}(x,y,z)^2-P_{12}(x,y,0)^2)|}$ is almost linear in $|z|$ near the center and thus can be controlled by $O(\sigma^2)$. 
The second term in \eqref{eq:z} can be controlled by $O(\sigma^3 n)$.

\begin{figure}
    \begin{center}
        \includegraphics[width=0.3\columnwidth]{./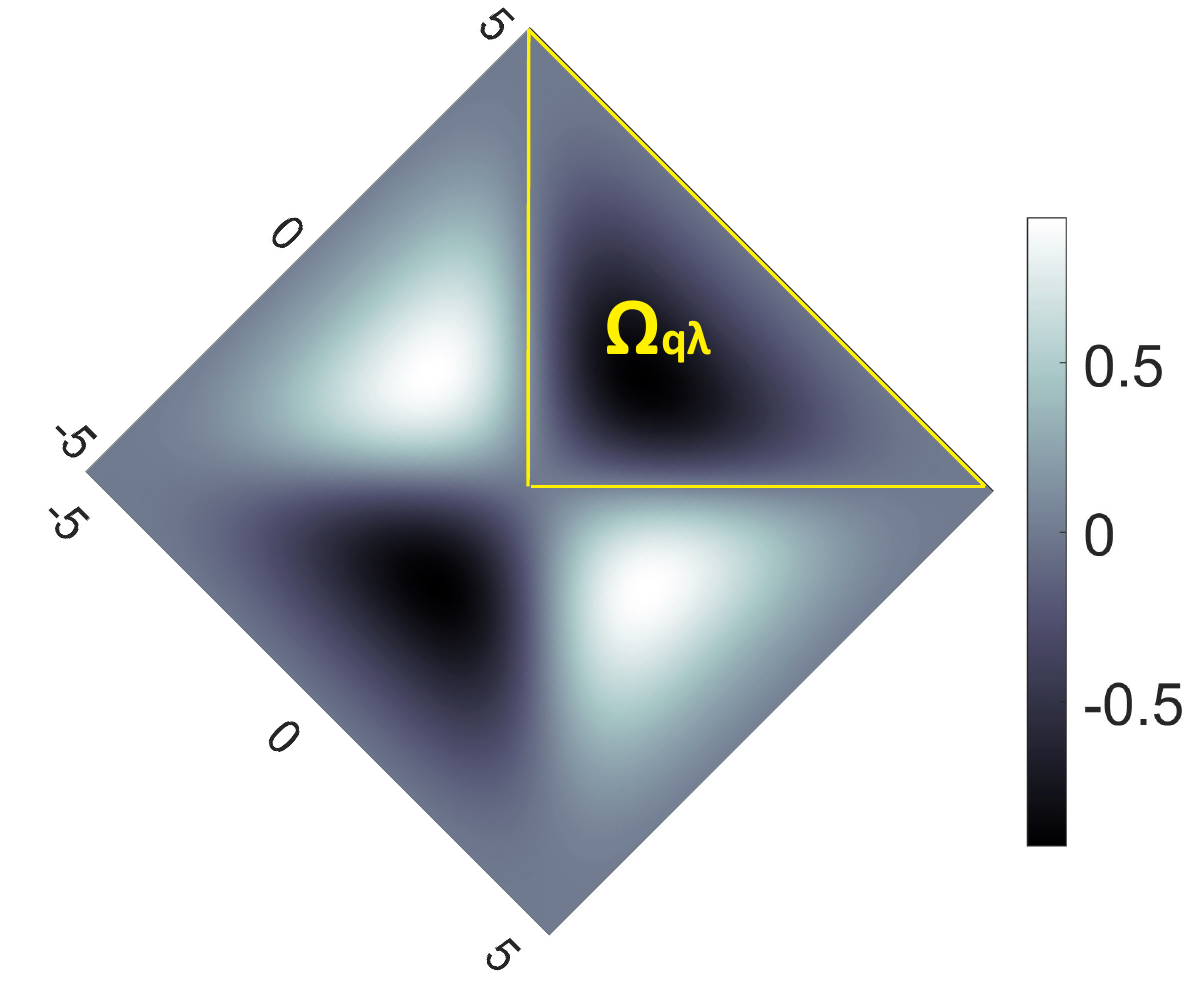}
        \caption{The plot of $P_{12,zz}$ on $SE_{4\lambda}$The area of domain $\Omega_{q \lambda}$ is framed with yellow lines.}
        \label{p12zz}
    \end{center}
\end{figure}
Let's define $\Omega_{q\lambda} := SE_{4\lambda}\cap \{x\geq 0\}\cap\{y\geq 0\}$.
The 2D WORS solution $\tilde{P}_{12}(x,y)$ satisfies
\begin{align}
&-\Delta_{xy} \tilde{P}_{12} + (\tilde{P}_{12}^2 - \frac{B^2}{4C^2})\tilde{P}_{12}^2 = 0\ on\ \Omega_{q\lambda} \label{eq:p}\\
&\tilde{P}_{12} = \frac{B}{2C}\ on\ \partial\Omega_{q\lambda}\backslash\{x=0\}\backslash\{y = 0\},\\
&\tilde{P}_{12} = 0\ on\ x = 0\ and\ y = 0.
\end{align}
According to Fig. \ref{p12zz}, for D-WORS-D, $P_{12,zz}(x,y,0)\leq 0$ on $\Omega_q$.
Hence $P_{12}(x,y,0)$ is a subsolution of \eqref{eq:p}.
Then $p_1 := max\{P_{12}(x,y,0),\tilde{P}_{12}(x,y)\}$ is a subsolution of \eqref{eq:p}.
The constant $B/2C$ is a supersolution of \eqref{eq:p}.
Therefore, by the classical sub- and supersolution method (Theorem 1 p.508 in \cite{evans2022partial}), there exists a solution $p_2$ of \eqref{eq:p} such that $p_1\leq p_2 \leq B/2C$.
Due to the maximum principle, we have $\tilde{P}_{12}(x,y)\geq 0$ on $\Omega_{q\lambda}$, and subsequently $p_2\geq 0$.
According to Lemma 4.2 in \cite{canevari2017order}, there is a unique non-negative solution of \eqref{eq:p}. Thus, we deduce $P_{12}(x,y,0)\leq p_1\leq p_2 = \tilde{P}_{12}(x,y)$, i.e., $P_{12}^2(x,y,0)-\tilde{P}^2_{12}(x,y)\leq 0$ on $\Omega_{q\lambda}$. One can repeat the arguments above on the remaining three quadrants to deduce that $P_{12}^2(x,y,0)-\tilde{P}^2_{12}(x,y)\leq 0$ on $SE_{4\lambda}$.

Finally, we have 
\begin{equation}
    \partial^2 F_{\lambda h}[\eta]\leq O(-\epsilon \sigma + \sigma n^{-1} + n\sigma^{-1}\lambda^{-2} + \sigma^3n)
\end{equation}
and the second variation is negative $\partial^2 F_{\lambda h}[\eta]\leq 0$
when $n \gtrsim \epsilon^{-1}$, $\sigma  \lesssim \sqrt{\epsilon/n}  \lesssim \epsilon$, $\lambda\gtrsim\sqrt{n/\epsilon}\sigma^{-1}\gtrsim n/\epsilon\gtrsim \epsilon^{-2}$, i.e. $\sigma$ is small enough; $h$ and $\lambda$ are large enough.

\begin{figure}
    \begin{center}
        \includegraphics[width=0.7\columnwidth]{./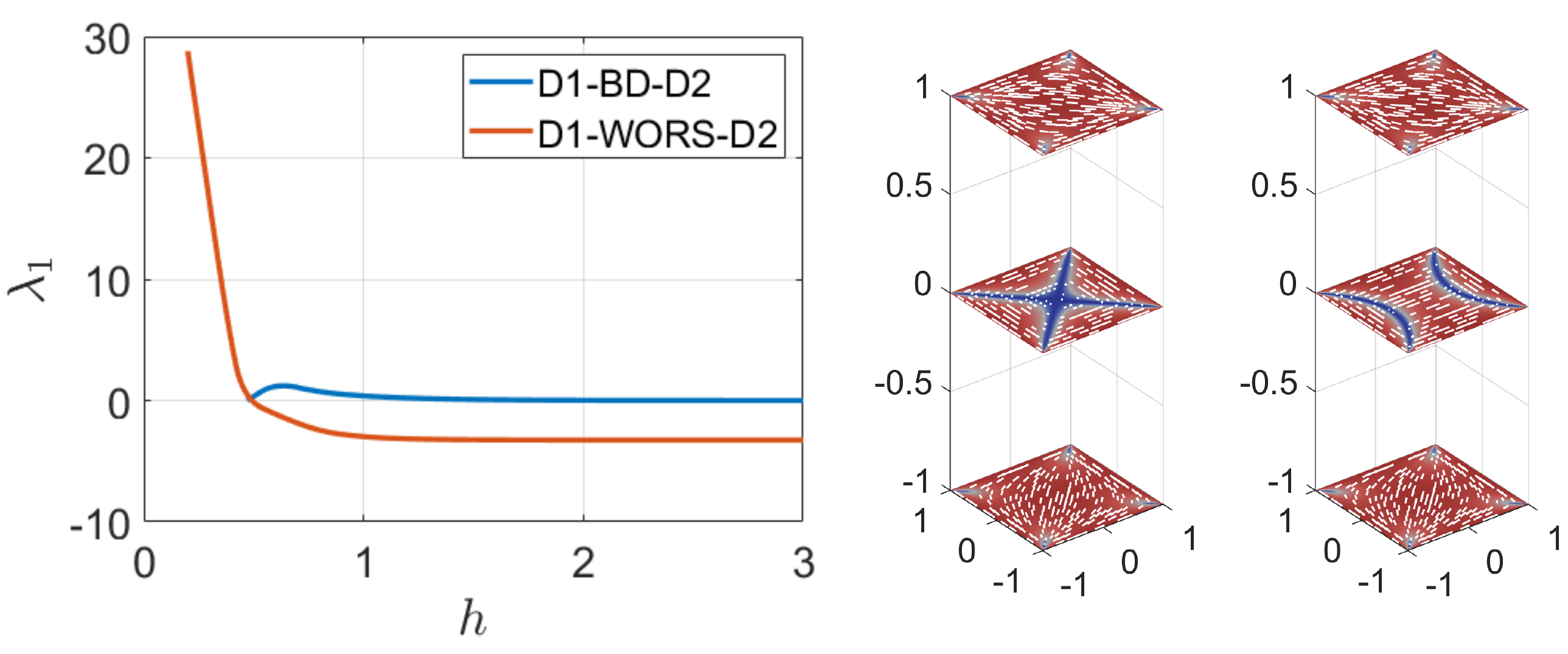}
        \caption{The smallest eigenvalue, $\lambda_1$, of the Hessian of  the critical points, $D1-WORS-D2$ and $D1-BD-D2$, for $\lambda^2=30$ v.s. $h$, and the plots of these two critical points at $h=1,\lambda^2=30$.}
        \label{lambda_1}
    \end{center}
\end{figure}

In Fig. \ref{lambda_1}, we plot the bifurcation diagram for solutions of \eqref{eq:EL} in a prism $V_4$ v.s. the height $h$. For $h$ small enough, the $D1-WORS-D2$ critical point is the unique stable state. As $h$ increases, the $D1-WORS-D2$ loses stability and bifurcates into two critical points: $D1-BD1-D2$ and $D1-BD2-D2$. The $BD$ states are unstable $z$-independent critical points of \eqref{p_energy} on $V_{4}$, subject to $\P = \P_l$ on the lateral surfaces. This is an interesting observation that we can observe unstable 2D states, such as $WORS$ and $BD$, (which are expected to be difficult to observe in purely 2D situations) by imposing stable boundary conditions, $D1$ and $D2$, on the top and bottom surfaces of the cuboid or $V_4$. $BD1$ and $BD2$ are two $BD$ critical points, related by a rotation, and are energetically degenerate. The smallest eigenvalue of the Hessian of the rLdG energy evaluated at the two distinct states, $D1-BD1-D2$ and $D1-BD2-D2$, are the same. An analogous bifurcation diagram has been reported in  \cite{shi2022hierarchies,canevari2020well}, with Neumann boundary conditions on the top and bottom surfaces of $V_4$.
The $D1-BD1-D2$ and $D1-BD2-D2$ critical points  also have a multi-block structure, with the $BD$-states confined to a small layer localised near $z=0$. 

As $h\to\infty$, according to Remark \ref{prop:unstable},  the smallest eigenvalue of the $D1-WORS-D2$ critical point converges to a negative constant (see Fig. \ref{lambda_1}). The eigenvector corresponding to this negative eigenvalue changes the $WORS$-type configuration on $z=0$, to a $BD$-type critical point on $z=0$. The $D1-BD-D2$ has a zero smallest eigenvalue for the associated Hessian of the rLdG energy in \eqref{p_energy} (see Fig. \ref{lambda_1}). The intuitive explanation is that the eigenvector corresponding to the zero eigenvalue moves the transition layer around $z=0$ up or down (provided it remains sufficiently far from the top and bottom surfaces), without changing the energy. Hence $D1-BD-D2$ cannot be strictly stable. 
This is consistent with the computation at the beginning of this subsection which demonstrates that $\mu(\infty)\leq 0$ for any multi-block critical point, $p_c$ such that $p_{c,z}\to 0$ as $h\to\infty$, on the top and bottom surfaces.

Whilst studying $z$-independent critical points of the rLdG energy \eqref{p_energy} on the square domain, subject to the boundary conditions $\P = \P_l$ on the square edges, the $WORS$ is index-$4$, and the $BD$ is index-$2$ for $\lambda^2 = 30$. There are other unstable states too, such as the index-$3$ $T$ and index-$2$ $H$ critical points (see Fig. \ref{TT}(b)). Fixing $D1$ and $D2$ to be $\P^b$ and $\P^t$ respectively, only the $BD$ and $WORS$-critical points appear on $z=0$, for the mixed 3D critical points. Hence, the boundary conditions impose constraints on the state observed on and around $z=0$, and consequently, the mixed 3D critical points, index-0 $D1-BD-D2$ and index-1 $D1-WORS-D2$ on a 3D cuboid have lower indices than the $BD$ and $WORS$ on a 2D square, respectively. We conjecture that the index of a 2D unstable critical point, $p_u$, is always higher than the index of a mixed LdG critical point on a 3D prism, which exhibits a $p_u$-type interior profile connecting the Dirichlet boundary conditions, $\P^t$ and $\P^b$ respectively.

\subsubsection{The $\lambda\to\infty$ limit}
In the $\lambda\to\infty$ limit, rLdG minimisers converge to minimisers of the bulk energy i.e. if $\P$ is a rLdG minimiser for large enough $\lambda$, then $|\P|^2 \to \frac{s_+^2}{2}$ , where $s_+ = \frac{B}{C}$, at least everywhere away from the edges and vertices of the cuboid geometry and defects. The choice of $s_+$ is dictated by the temperature, in this case $A = -\frac{B^2}{3C}$. This can be seen informally by the competition between the bulk and elastic energy terms in \eqref{p_energy} in the $\lambda \to \infty$ limit, and rigorously using variational arguments as in \cite{majumdar2010landau} for example. In other words, to leading order, in the $\lambda \to \infty$ limit, the rLdG minimiser is of the form
$$
\P = s_+(\n\otimes\n-\I/2),
$$
away from the vertices and edges of $V_K$,  where 
$\mathbf{n} = (\cos\theta,\sin\theta)$.
Hence, the energy functional in \eqref{p_energy} reduces to 
\begin{equation}\label{infty_energy}
F_{\infty} = \int_{V_4} |\nabla_{xy} \theta|^2+\frac{1}{2h^2}\left|\theta_{,z}\right|^2\textrm{dV}.
\end{equation}
The Dirichlet boundary conditions on the top and bottom surfaces, $\P^{t}$ and $\P^b$, have directors, $\n^{t} = (\cos \theta^{t},\sin \theta^{t})$ and $\n^{b} = (\cos \theta^{b},\sin \theta^{b})$ respectively. Hence, to leading order, for the rLdG minimiser, $\theta$ is a solution of
\begin{equation}\label{eq:EL_infty}
        \Delta_{xy}\theta + \frac{1}{h^2}\Delta_z\theta = 0,
\end{equation} subject to the boundary conditions, $\theta = \theta^t$ on $z=1$, and $\theta = \theta^b$ on $z=-1$.
If $\P^t$ and $\P^b$ are chosen to be stable 2D critical points of the rLdG energy i.e. stable solutions of \eqref{eq:top_bottom}, then for $\lambda$ sufficiently large, $\theta^t$ and $\theta^b$ are (to leading order) solutions of the Laplace equation (see \cite{han2020reduced} for more details).
Given these conditions on $\P^t$ and $\P^b$, one can easily check that the corresponding solution of \eqref{eq:EL_infty} is linear in $z$ and is given by,
\begin{equation}\label{eq:solution_infty}
\theta(x,y,z) = \frac{z+1}{2} \theta^{t}(x,y) + \frac{(1-z)}{2}\theta^{b}(x,y).
\end{equation}
The leading order rLdG minimiser is given by, $\P=s_+(\n\otimes\n-\I/2)$ where $\n = (\cos\theta,\sin\theta)$ and $\theta$ is given by \eqref{eq:solution_infty}, in the $\lambda\to\infty$ limit. The solution in \eqref{eq:solution_infty} is a good approximation to minimisers of the reduced LdG energy, for large enough $\lambda$, see Fig. \ref{lambda_infty}. 

Next, we give some examples of solutions of the form of \eqref{eq:solution_infty}.
We divide the domain $\Omega$ into four triangular domains, $\Omega_i$ by the diagonal lines $x = \pm y$ (see Fig. \ref{divide_domain}). We denote the two diagonal legs of $\Omega_i$ as, $C_{il}$ and $C_{ir}$, respectively. This domain division is only useful/applicable for $V_4$ with $D1$ and $D2$ as $\P^b$ and $\P^t$ respectively, since the numerically computed rLdG critical points only exhibit defects along the diagonals and edges of $E_4$.
\begin{figure}
    \begin{center}
        \includegraphics[width=0.4\columnwidth]{./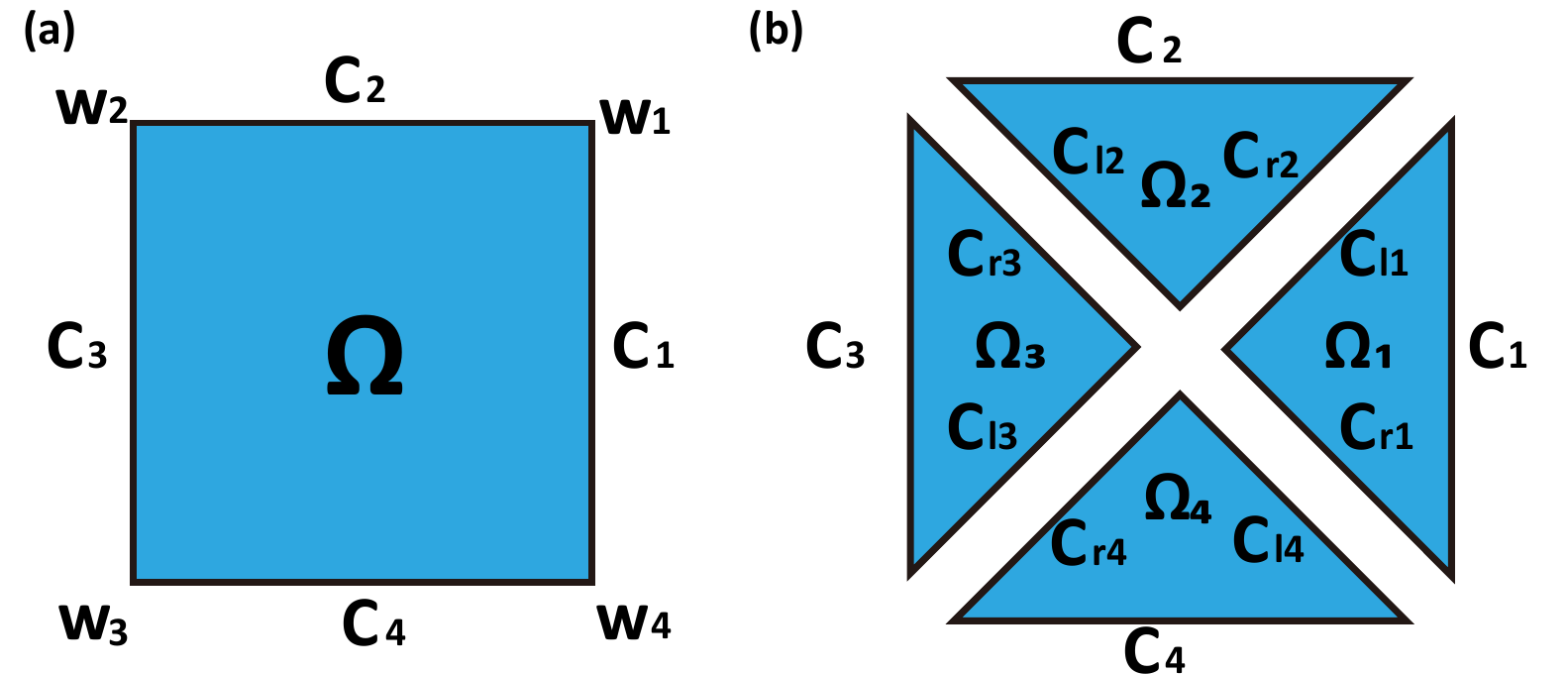}
        \caption{We divide the square domain, $\Omega$, into four sub-domains, $\Omega_i$, with diagonal legs $C_{ri}$ and $C_{lr}$, $i = 1,\cdots,4$.}
        \label{divide_domain}
    \end{center}
\end{figure}

\begin{figure}
    \begin{center}
        \includegraphics[width=0.6\columnwidth]{./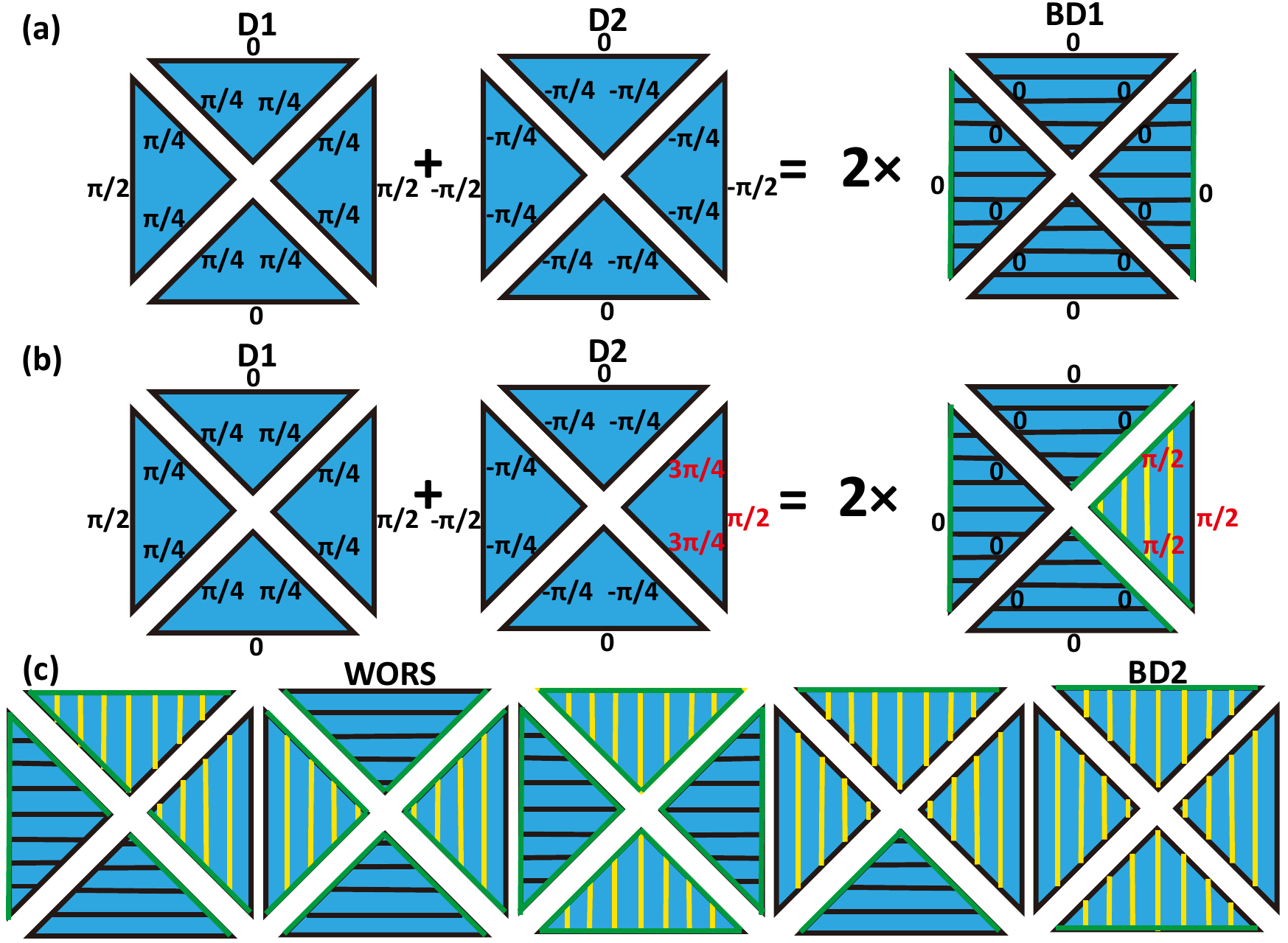}
        \caption{(a) and (b) illustrate the different possibilities for the middle layer configuration on $z=0$, obtained from different choices of the boundary conditions, $\theta^b$ and $\theta^t$ on $C_{il}$ and $C_{ir}$ respectively (corresponding to $D1$ and $D2$ on the bottom and top of $V_4$). (c) Other potential choices for the configuration on the middle layer. Black/yellow lines indicates that the nematic director is along $x/y$-axis. The $\theta^{D2}$($\theta^t$) is increased by $\pi$ on the $\Omega_i$ occupied by yellow lines. Green lines on edges or diagonals represents line defects.}
        \label{fig:possibility}
    \end{center}
\end{figure}
We have multiple choices for $\theta^t$($\theta^b$), corresponding to the same state. The boundary conditions, $\theta^t$ and $\theta^b$ can jump by multiples of $\pi$ across common diagonal edges of adjacent triangular domains. 
In the simplest case, we prescribe $\theta^b$ and $\theta^t$ on the diagonal edges of $\Omega_i$, without any discontinuities or jumps. For example, for the $D1$ and $D2$ solutions, we can have
\begin{gather}\label{eq:D_theta}
\theta^{D1}_b = \begin{cases}
\pi/2\ on\ C_1\ and\ C_3,\\
0\ on\ C_2\ and\ C_4,\\
\pi/4\ on\ C_{ri}\ and\ C_{li},\ i = 1,\cdots, 4.
\end{cases}
\theta^{D2}_t = \begin{cases}
-\pi/2\ on\ C_1\ and\ C_3,\\
0\ on\ C_2\ and\ C_4,\\
-\pi/4\ on\ C_{ri}\ and\ C_{li},\ i = 1,\cdots,4.\\
\end{cases}
\end{gather}
Hence on $z=0$, $\theta$ is given by (refer to \eqref{eq:solution_infty})
\begin{equation}
\theta = (\theta^{D1}+\theta^{D2})/2 \equiv 0,\ on\ \Omega,\\
\end{equation}
which is the $BD1$ state (Fig. \ref{fig:possibility}(a)) that has two line defects on the left and right edges. This is the candidate for the global rLdG minimiser in the $\lambda \to \infty$ limit.

We fix the boundary conditions for $D1$ ($\theta^b$) as in \eqref{eq:D_theta}, and then increase $\theta^t$ on one or more of the diagonal edges of $\Omega_i$ by $\pi$, i.e., increase $\theta$ on the middle layer by $\pi/2$, to generate more candidates for the middle layer configuration on $z=0$. Given $\theta^t$ and $\theta^b$, we get the solution $\theta$ by \eqref{eq:solution_infty} on the square quadrant (bounded by two diagonal legs and one square edge).
This allows us to define $\P^{t}$  by the relations, $P_{11}^{t} = \frac{s_+}{\sqrt{2}}\cos 2\theta^t, P_{12}^{t} = \frac{s_+}{\sqrt{2}}\sin 2 \theta^{t}$. We then have $\P^t$ and $\P^b$ on the square quadrant, and then define a critical point on the entire prism, by reflecting the solution on the quadrant as in Proposition~\ref{prop:forever_critical}. The computed rLdG tensors, $\P$ are not expected to be approximate rLdG minimisers for large enough $\lambda$, but could be good initial conditions for computing unstable saddle points of the rLdG energy in this limit. 
For example, we modify $\theta^t$ as shown below 
\begin{equation}
\theta^{D2}_t = \begin{cases}
\pi/2\ on\ C_1\\
-\pi/2\ on\ C_3,\\
0\ on\ C_2\ and\ C_4,\\
3\pi/4\ on\ C_{r1}\ and\ C_{l1},\\
-\pi/4\ on\ C_{ri}\ and\ C_{li},\ i = 2,\cdots,4.\\
\end{cases}
\end{equation}
Subsequently, we have the following profile for $\theta$ on $z=0$ (refer to \eqref{eq:solution_infty}): 
\begin{equation}
\theta = (\theta^{D1}+\theta^{D2})/2=
\begin{cases}
 \pi/2,\ on\ \Omega_1,\\
 0,\ on\ \Omega_i,\ i = 2,\cdots,4,\\
\end{cases}
\end{equation}
accompanied by line defects on $\{(x,y):y = \pm x,x\geq 0\}$ (Fig. \ref{fig:possibility}(b)). 

We can generate more possibilities for $\theta$ on $z=0$ by adding multiples of $\pi$ to $\theta^t$, on the diagonal edges of $\Omega_i$ (see Fig. \ref{fig:possibility}(c)). These modified boundary conditions generate line defects and asymmetric configurations, and are unlikely to be observed in practice. We only observe the $BD1$, $BD2$ and $WORS$ configurations on $z=0$, for the numerically computed critical points of the rLdG energy, with $(\P^b, \P^t) = (D1, D2)$. 




\begin{figure}
    \begin{center}
        \includegraphics[width=0.7\columnwidth]{./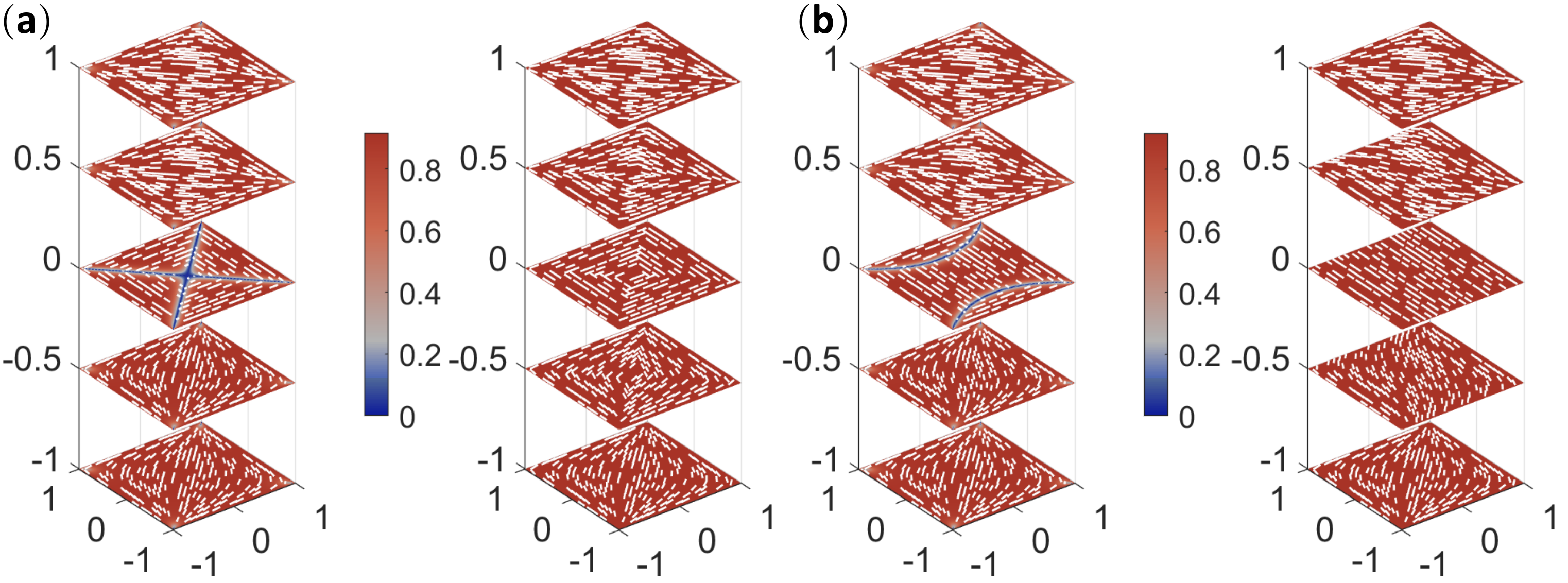}
        \caption{The plots of the solutions of Euler--Lagrange equation in \eqref{eq:EL} with $\lambda^2 = 300$ and the corresponding limiting profile as $\lambda\to\infty$, with $h = 1$, (a) $D-WORS-D$, (b) $D-BD-D$. In (a) and (b), we plot the numerical solutions on the left for which the nematic director, given by the vector field $(\cos(arctan(P_{12}/P_{11})/2),\sin(arctan(P_{12}/P_{11})/2))$ is plotted by white lines, and the order parameter, $\sqrt{P_{11}^2 + P_{12}^2}$, is plotted in terms of the color bar. The right images (for (a) and (b)) are the limiting profiles, the white lines plot the vector field, $(\cos(2\theta),\sin(2\theta))$, and the red color implies that the order parameter is constant, $s \equiv \frac{B}{C}$, for the limiting profile.} 
        \label{lambda_infty}
    \end{center}
\end{figure}
\subsection{$(\P^b, \P^t)$: $T1$ and $T2$}
The unstable states, $T1$ and $T2$, exist on a 2D square domain for $\lambda^2\geq 25$, with a line defect along one of the square diagonals (Fig. \ref{2D_solution}). For $\lambda^2 = 30$, the critical point index-$3$ $T$ is connected with a higher-index critical point, index-$4$ $WORS$, and lower-index critical points, index-$2$ $BD$ and $H$, index-$1$ $J$, index-$0$ $D$ and $R$ (Fig. \ref{TT}(b)) \cite{yin2020construction}. We can use this information to construct families of mixed 3D critical points, with the fixed boundary conditions on the top and bottom surfaces of $V_4$. In what follows, we label critical points by means of strings e.g. $A - B - C - D$ where $A$ and $D$ are the fixed boundary conditions, and $B$ and $C$ are $z$-invariant solutions of the reduced Euler-Lagrange equations, compatible with the lateral boundary conditions, $\P^l$.
The complicated bifurcation diagram is partially illustrated in Fig. \ref{TT}(d), where the corresponding states are plotted in Fig. \ref{TT}(c). 

For $h$ small enough, the $T-WORS-T$ critical point (with the $WORS$ configuration on $z=0$) is the unique stable state. As $h$ increases, $T-WORS-T$ loses stability and bifurcates into a stable $T-D-T$ and an index-$1$ $T-R-BD-R-T$; this is analogous to the bifurcation from a stable $WORS$ to a stable $D$ solution and unstable $BD$ solution on a square domain as the edge length increases, see \cite{robinson2017molecular}. Then the $T-WORS-T$ further bifurcates to an index-$2$ $T-R-T-R-T$, and an index-$3$ $T-D-WORS-D-T$. 
The index-$1$ $T-R-BD-R-T$ critical point further bifurcates into an index-$2$ $T-R-BD-R-T$ critical point, and an index-$1$ $T-R-BD-R-T(2)$. 
The index-$3$, $T-D-WORS-D-T$ critical point, further bifurcates into an index-$2$ $T-D-WORS-D-T$ and an index-$3$ $T-J-T-J-T$. The state, $T-D-WORS-D-T$, contains the familiar $D-WORS-D$, as discussed in Section \ref{sec:cuboid}. 
One can find the corresponding 2D pathways on the 2D solution landscape in Fig. \ref{TT}(b), which give rise to these exotic 3D critical points of the rLdG energy with fixed initial and end points (boundary conditions). For example, the pathways in Fig. \ref{TT}(a), $T\rightarrow D\rightarrow WORS \rightarrow D \rightarrow T$ correspond to the 3D solution $T-D-WORS-D-T$ in Fig. \ref{TT}(b) and $T\rightarrow R\rightarrow BD \rightarrow R \rightarrow T$ corresponds to the 3D solution, $T-R-BD-R-T$ in Fig. \ref{TT}(b). For $(\P^b, \P^t) = (D1, D2)$, the mixed 3D critical points correspond to pathways on the 2D solution landscape in Fig. \ref{TT}(a) via one higher-index states(index-2 $BD$/index-4 WORS in 2D). For $(\P^b, \P^t) = (T1, T2)$, we obtain 3D critical points that correspond to different kinds of pathways on the 2D solution landscape i.e. 2D pathways via lower-index states (index-0 $D$/index-0 $R$/index-1 $J$)
and/or  a higher-index state (index-2 $BD$/index-3 $T$/ index-4 $WORS$) like $T-D-WORS-D-T$, or via a single higher-index saddle point as in the $T-WORS-T$ critical point.  
\begin{figure}
    \begin{center}
        \includegraphics[width=0.8\columnwidth]{./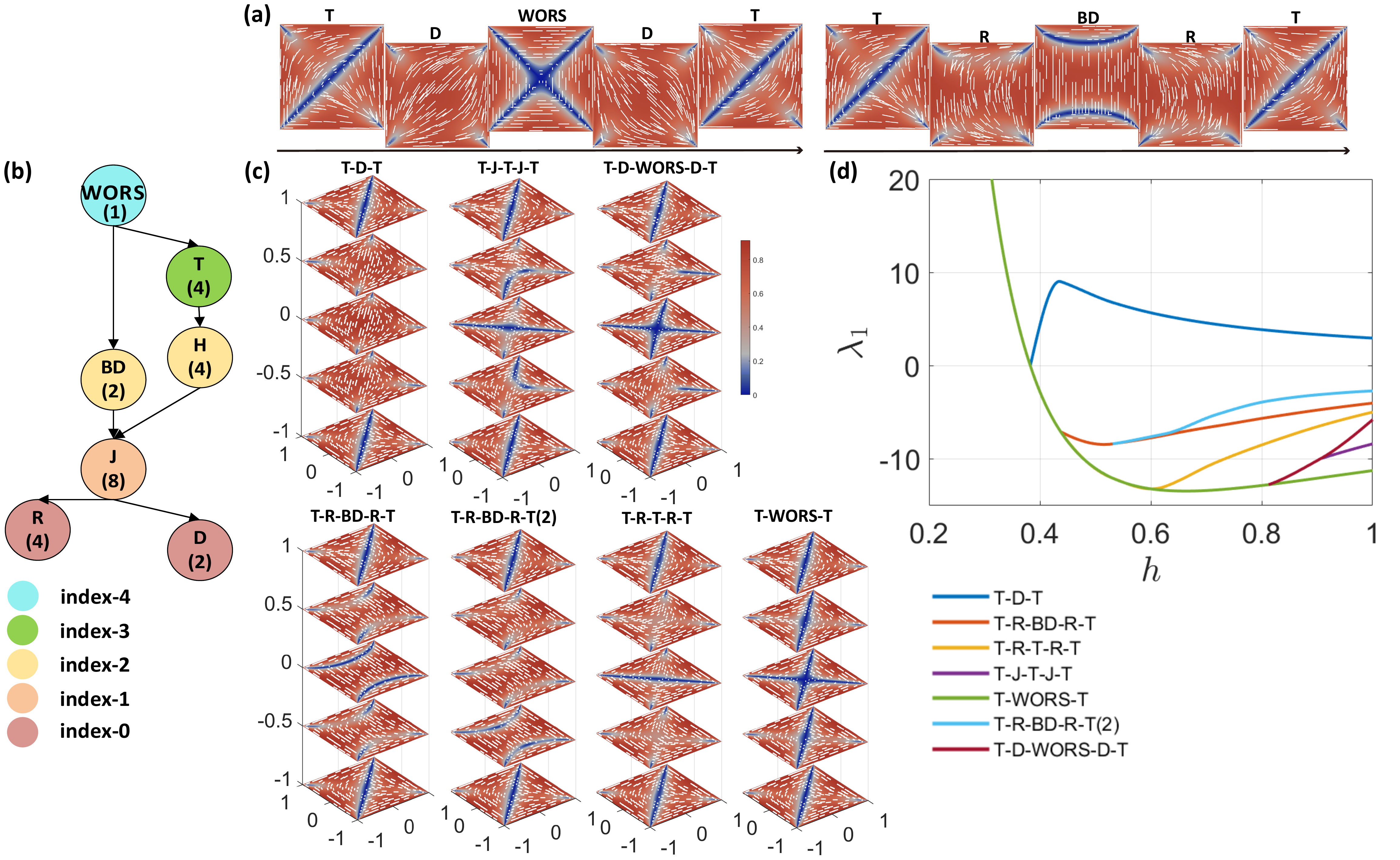}
        \caption{
        (a) Two 2D pathways between the two fixed $T$ solutions on the top and bottom,  which correspond to 3D solutions, $T-D-WORS-D-T$ and $T-R-BD-R-T$. The $T$, $BD$ and the $WORS$-profiles are unstable saddle points of the rLdG energy on $E_4$, whilst $D$ and $R$ solutions are stable solutions of \eqref{eq:top_bottom} with $\lambda^2=30$. (b) The solution landscape with 2D critical states from index-4 to index-0. The number in each disk indicate the number of states in the class. (c) The 3D solutions of \eqref{eq:EL} with $(\P^b, \P^t) = (T1, T2)$ with $\lambda^2 =30$ and $h = 1$. The blue color label the point defect or line defect. (d) The smallest eigenvalue $\lambda_1$ of the multiple critical points. In this and all the subsequent figures, the color bar encodes the order parameter $\sqrt{P_{11}^2+ P_{12}^2}$, and the white lines label the nematic director, $(cos(arctan(P_{12}/P_{11})/2),sin(arctan(P_{12}/P_{11})/2))$.
        }
        \label{TT}
    \end{center}
\end{figure}

In contrast to the case with stable $D1$ and $D2$ solutions as boundary conditions, we get a genuine rLdG minimiser $T-D-T$ as $h\to \infty$, for which the second variation of the rLdG energy is strictly positive. The $T$ states are unstable and have higher energy than the middle $D$ layer. Consequently, the middle layer extends to a small neighbourhood of the boundaries, and we get thin transition layers near the boundaries, $z=\pm h$. Hence, letting $\P_c = T-D-T$, we do not have $\P_{c,z} \to 0$ as $z\to \pm h$, and  there is no eigenvector corresponding to the movement of jump layers or transition layers, i.e., we lose the analogy of the zero-eigenvalue in $D-BD-D$. 
\section{Other prism $V_K$}
\label{sec:prism}

We systematically study various classes of stable solutions on 2D polygons, in the rLdG framework, for large $\lambda$ in \cite{han2020reduced}. Namely, we use simple combinatorial arguments to show that there are at least $\frac{K(K-1)}{2}$ stable rLdG equilibria on a regular $K$-polygon with $K$-edges, and $[\frac{K}{2}]$ classes of stable equilibria, not related by rotation and reflection. The stable rLdG equilibria are distinguished by the location of the splay vertices i.e. the polygon vertices for which the nematic director rotates by $2\pi/K-\pi$ around the vertex. We make certain physically reasonable assumptions about the boundary data, to show that the stable states always have two splay vertices  and the distinct equilibria classes are distinguished by the relative location of two splay vertices \cite{han2020reduced}. For example, on 2D hexagon, there are three classes of stable equilibria: $Para$ with a pair of diagonally opposite splay vertices, $Meta$ with a pair of splay vertices separated by one vertex, and $Ortho$ with two adjacent splay vertices. 

Next, we adapt the 2D arguments in \cite{han2020reduced} to make some elementary predictions about the number of mixed 3D critical points on the prism, $V_K$, that has a polygonal cross-section, $E_K$, with $K$-edges. For example, take $\P^t$ and $\P^b$ to be two distinct global energy minimisers on $E_K$ 
i.e. a \emph{Para} state with two diagonally opposite splay vertices on a hexagon and a \emph{Meta} state, with two splay vertices separated by a vertex, on a pentagon. 
Since the Laplace operator in \eqref{eq:EL} is rotationally invariant, we can rotate the regular $K$-polygon domain around the $z$-axis so that the boundary conditions on the top and bottom surfaces have the following reflection symmetry property about $y = 0$ axis, 
\begin{align}
P_{11}^t(x,y) &= P_{11}^b(x,-y),\\
P_{12}^t(x,y) &= -P_{12}^b(x,-y).
\end{align}
For small $h$, there is a unique rLdG minimiser (critical point) for a given $\P^t$ and $\P^b$, and we can show that if (recall the arguments in \ref{small_h}) $(P_{11},P_{12})(x,y,z)$ is a solution of \eqref{eq:EL}, then $(P_{11},-P_{12})(x,-y,-z)$ is also a solution of \eqref{eq:EL}, subject to the boundary conditions above. The solution is unique for small $h$ and hence, has the symmetry property 
\begin{align}
P_{11}(x,y,z) = P_{11}(x,-y,-z),\\
P_{12}(x,y,z) = -P_{12}(x,-y,-z).
\end{align}
On the middle cross-section $z=0$, the reflection symmetry axis is $y = 0$, and we have $P_{12}(x,0,0) = -P_{12}(x,0,0) = 0$, i.e., the nematic director is either parallel or perpendicular to the symmetry axis, or we have a defect with $\P = 0$ on $y=z=0$. 

To distinguish between the three situations, the sign of $P_{11}$ is taken into consideration. If $P_{11}(x,0,0) \geq/\leq/= 0$, then the nematic director is parallel/ perpendicular/undefined along the symmetry axis $(y,z) = (0,0)$, the third case corresponding to a nematic defect. The three situations are captured by three commonly observed 2D solutions - the $WORS$ on a square domain, with line defects along $(y,z) = (0,0)$; the $BD1$ on a pentagon domain (Fig. \ref{fig:pentagon}), with nematic director perpendicular to the symmetry axis, $(y,z) = (0,0)$; and the $BD2$ on a pentagon domain (Fig. \ref{fig:pentagon}), with the nematic director parallel to $(y,z) = (0,0)$ almost everywhere in the interior of the pentagon. 
The $WORS$ is exclusive to a square domain, but the $BD$-states are generic. We conjecture that for given $\P^t \neq \P^b$ in the same class of globally stable rLdG equilibria, the $\P^b - BD - \P^t$ is the unique rLdG minimiser for small $h$, that remains potentially stable for all $h$ (or at least has non-negative second variation), on a generic prism $V_K$ for $K>4$.  


Repeating the same arguments as in \cite{han2020reduced}, we argue that for a given $\P^t$ and $\P^b$ on $V_K$, where $\P^t$ and $\P^b$ belong to the same class of lowest energy rLdG equilibria on $E_K$, 
(e.g. two $Meta$ states on $E_5$, and two $Para$ states on $E_6$),  there are $[K/2]$ ($[K/4]$) distinct classes of mixed 3D critical points on $V_K$ with odd(even) $K$, not related by rotation and reflection. For example on $V_5$, we have $[5/2] = 2$ classes of mixed 3D solutions $Meta-BD1-Meta$ and $Meta-BD2-Meta$ in Fig. \ref{fig:pentagon}. On $V_6$, we have $[6/4] = 1$ class of 3D mixed solution, labelled by $Para-BD-Para$ in Fig. \ref{fig:para}. 

\begin{figure}
    \begin{center}
        \includegraphics[width=0.3\columnwidth]{./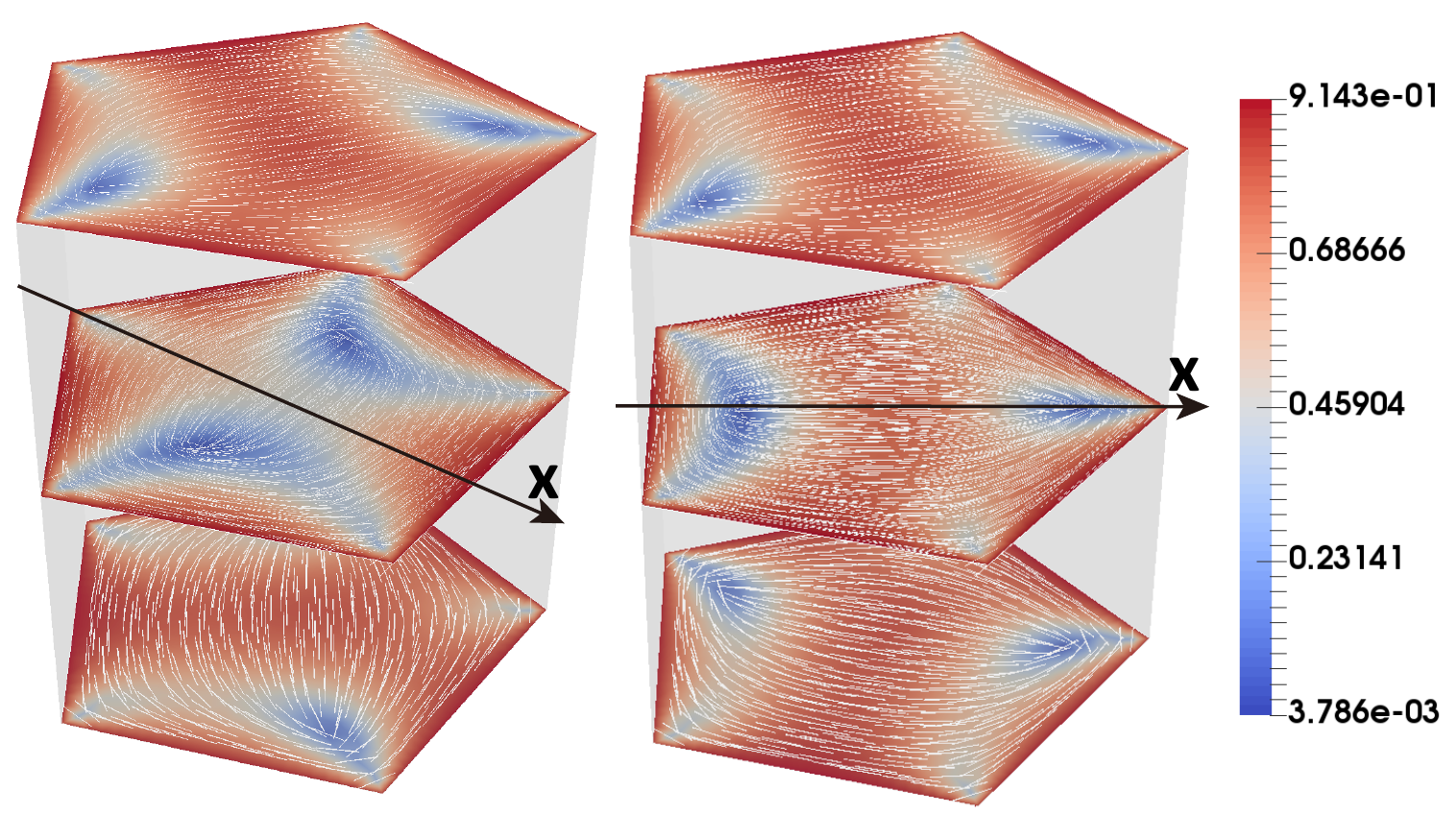}
        \caption{The profiles of mixed solutions of \eqref{eq:EL} in $V_5$ with two $Meta$ on the top and bottom,  $Meta-BD1-Meta$ and $Meta-BD2-Meta$  at $\lambda^2=30$ and $h = 0.3$. The right state has lower energy than the left. The axis indicates the symmetry axis $(y,z) = (0,0)$.}
        \label{fig:pentagon}
    \end{center}
\end{figure}

\subsection{Hexagonal prism, $V_6$}

In this section, we restrict attention to a hexagonal prism $V_6$, since we study the 2D problem on $E_6$ in detail in \cite{han2021solution,han2020reduced}. On $E_6$, there are three competing stable classes of rLdG equilibria - $Para$, $Meta$, and $Ortho$ with two splay vertices each, amongst which $Para$ has the lowest energy, for large enough $\lambda$.
Additionally, in \cite{han2021solution}, a new stable state $TRI$ is reported, for $\lambda$ large enough. $TRI$ has one central $-1/2$ point defect and three splay vertices, and has the highest energy among all the numerically computed stable rLdG equilibria on $E_6$. We numerically compute a plethora of unstable index-$k$ saddle points of the rLdG energy on $E_6$, for which the Hessian of the rLdG energy has $k$ negative eigenvalues or $k$ unstable directions. We report pathways between the stable rLdG equilibria on $E_6$, and there are pathways mediated by the commonly reported transition states (index-$1$ saddle points) and pathways mediated by high-index saddle points in $E_6$, illustrating the fascinating connectivity of the 2D solution landscape of the rLdG model on $E_6$ in \cite{han2021solution}.

Consider $V_6$ and fix $\P^t$ and $\P^b$ to be two different $Para$ states, with diagonally opposite splay vertices. By the arguments in the preceding sections, for $h$ small enough, we expect to find a $P-BD-P$ state with a $BD$-like profile on $z=0$; indeed we numerically find a $P-BD-P$ state (see Fig. \ref{fig:para}(a)). In Fig. \ref{fig:para}(a), we notice blue tubes connecting the splay vertices of $\P^t$ and $\P^b$, and the blue tubes are defect lines running through the height of $V_6$. These defect lines could have pronounced optical and mechanical responses in experiments. The mixed 3D critical point, $P- BD-P$ corresponds to a pathway between two $Para$($P$) states on the 2D solution landscape, which proceeds via a $BD$ state. 
The $P-BD-P$ state is always a minimiser and doesn't bifurcate into other critical points. As $h\to\infty$, the smallest eigenvalue of the Hessian of the rLdG energy about the $P-BD-P$ state tends to be zero (Fig. \ref{fig:para}(c)), which is analogous to $D-BD-D$ state in Section \ref{sec:cuboid}.

Next, we fix $\P^t$ and $\P^b$ to be two different $TRI$ states, on top and bottom surfaces. Here the solution landscape is rich and we find multiple stable and unstable mixed 3D critical points (see Fig. \ref{fig:TRI}). All the mixed critical points exhibit line defects (blue tubes) running across the height of $V_6$, and these line defects connect the splay vertices and the central point defects on the top and bottom. There are multiple combinations of these line defects, which offer multiple possibilities for exotic morphologies. Similar line defects have been observed in a 3D cylinder, where there are straight defect lines  and defect rings in both experiments and numerical simulations \cite{williams1972nonsingular, han2019transition}.
We may not have found all the mixed 3D critical points with these fixed boundary conditions, but it is notable that the numerically computed mixed 3D critical points have corresponding counterpart pathways on the 2D solution landscapes reported in \cite{han2021solution}. For example, in \cite{han2021solution}, we report a pathway between the fixed $TRI$ states constructed by four transition pathways, via index-$1$ states, $T0$ and $M1$. We numerically find the mixed 3D critical point $TRI-T0-M-M1-P-M1-M-T0-TRI$, with the stable $Para$ state in the middle, corresponding to the 2D pathway reported in \cite{han2021solution} (see Fig. \ref{fig:TRI}(a)).
We also report pathways via high-index saddle points like $BD$ in \cite{han2021solution}.
Fig. \ref{fig:TRI}(b) shows a part of the solution landscape where $TRI$ is directly or indirectly connected via high-index saddle points. In Fig. \ref{fig:TRI}(c), from bottom to the top of $TRI-T135-RING-T135-TRI$ state, the three $+1/2$ defect near vertices move towards the central $-1/2$ defect, merge together and we obtain the Ring solution with a unique central $+1$ defect, and then reverse the process to connect to $TRI$($\P^t$).
Again, this mixed 3D critical point corresponds to a pathway between $\P^t$ and $\P^b$ on the 2D solution landscape.
The mixed critical point, $TRI-T0-M-BD-M-T0-TRI$ is constructed by two transition pathways, via index-$1$ $T0$ state and one pathway via the index-2 $BD$ state. The mixed states, $TRI-T130-P-T130-TRI$ and $TRI-T10-P-T10-TRI$ go through two index-3 $T130$ states, and an index-2 $T10$ state, respectively. This illustrates the relevance of unstable higher index saddle points on $E_6$, for rLdG critical points in three dimensions. 
From Fig. \ref{fig:TRI}(d), and according to our numerical computations, 
when $h$ is small, $TRI-T135-Ring-T135-TRI$ is stable, and for $h$ large enough, both $TRI-T0-M-M1-P-M1-M-T0-TRI$ and $TRI-T0-M-M1-BD-M1-M-T0-TRI$ are stable.

\begin{figure}
    \begin{center}
        \includegraphics[width=0.7\columnwidth]{./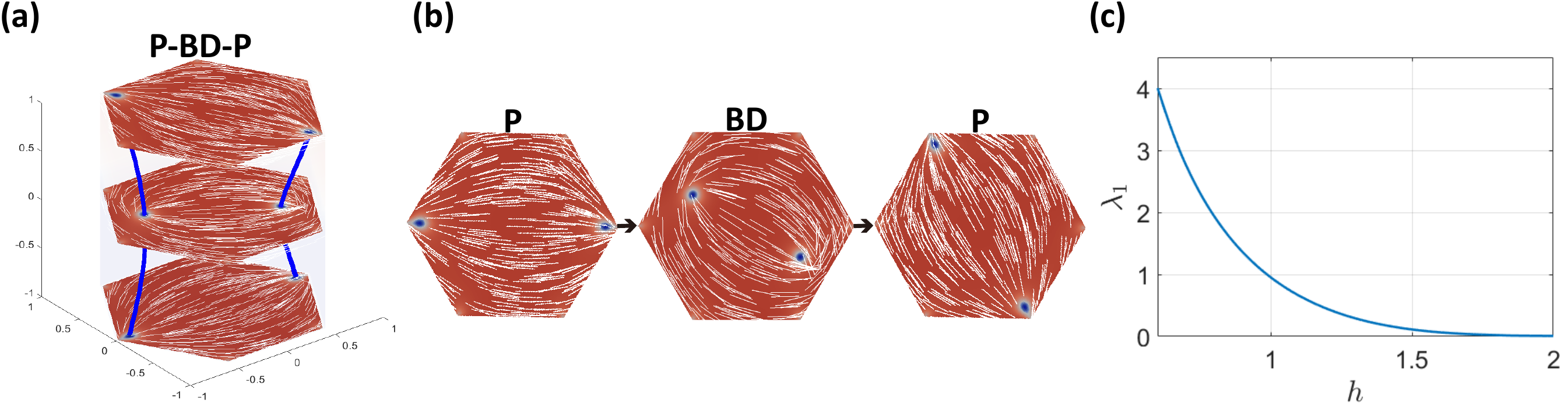}
        \caption{(a) The 3D stable state $P-BD-P$, the solution of \eqref{eq:EL} with $\lambda^2=600$ and $h = 1$; (b) the 2D pathway between two distinct $Para$ via a $BD$ with $\lambda^2 = 600$; (c) the smallest eigenvalue of $P-BD-P$.}
        \label{fig:para}
    \end{center}
\end{figure}

\begin{figure}
    \begin{center}
        \includegraphics[width=\columnwidth]{./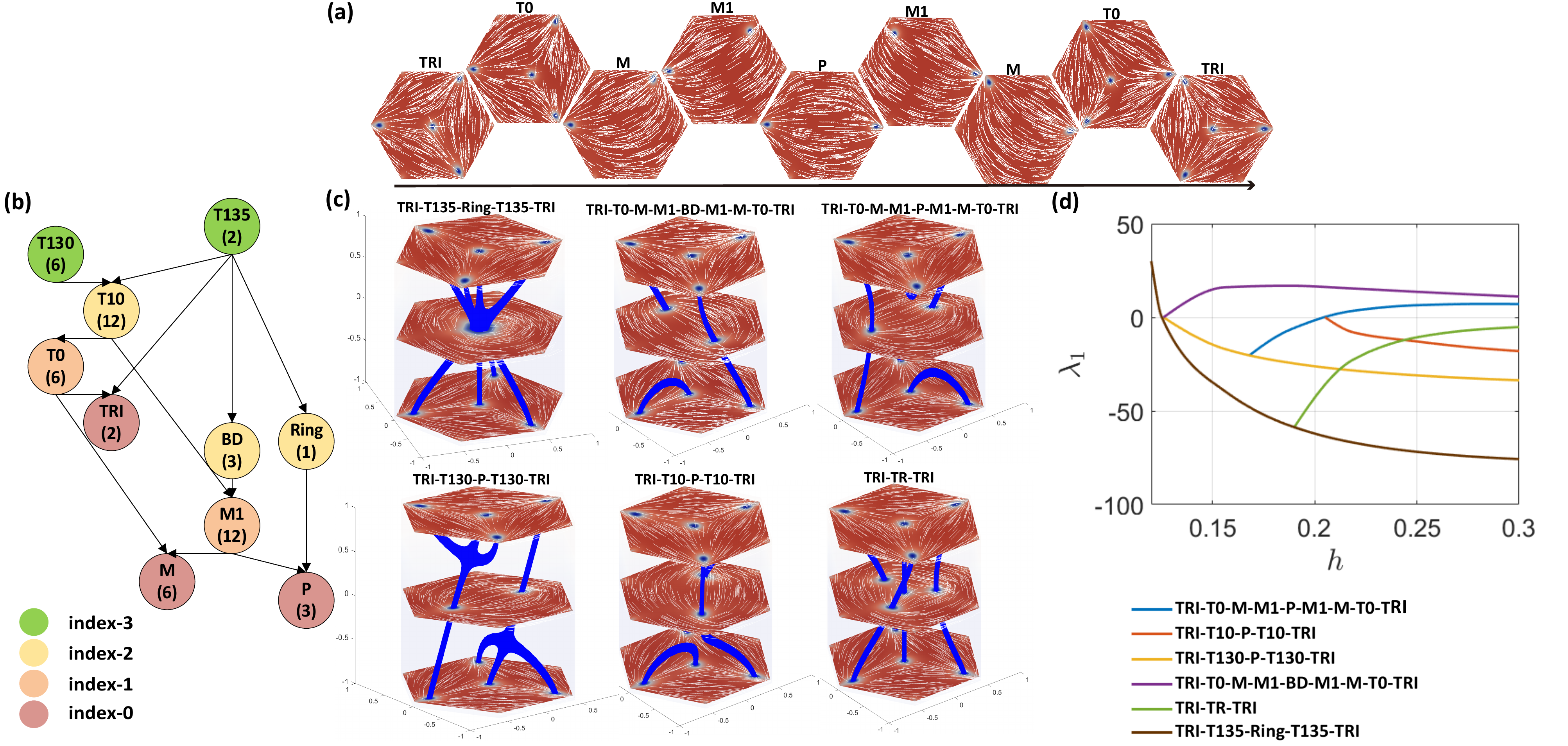}
        \caption{(a) A 2D transition pathway between two $TRI$ solutions, corresponding to the 3D solution $TRI-T0-M-M1-P-M1-M-T0-TRI$. The profiles on the top line are index-1 saddle points, and on the bottom line are minimizers, the solutions of \eqref{eq:top_bottom} with $\lambda^2=600$. (b) A part of solution landscape showing the connectivity of index-3 saddle points to index-0 saddle points. The number in each disk indicate the number of states (related by rotation and reflection) in the class. (c) The 3D solutions of \eqref{eq:EL} with two $TRI$ states as $(\P^b, \P^t)$, with $h = 0.4$, $\lambda^2 = 300$. The blue lines are the defect lines, for which $|\P|$ approaches zero or is of much smaller magnitude than the neighbouring region. (d) The smallest eigenvalue $\lambda_1$ of the Hessian of the rLdG energy evaluated about the multiple critical points.}
        \label{fig:TRI}
    \end{center}
\end{figure}
\begin{figure}
    \begin{center}
        \includegraphics[width=0.8\columnwidth]{./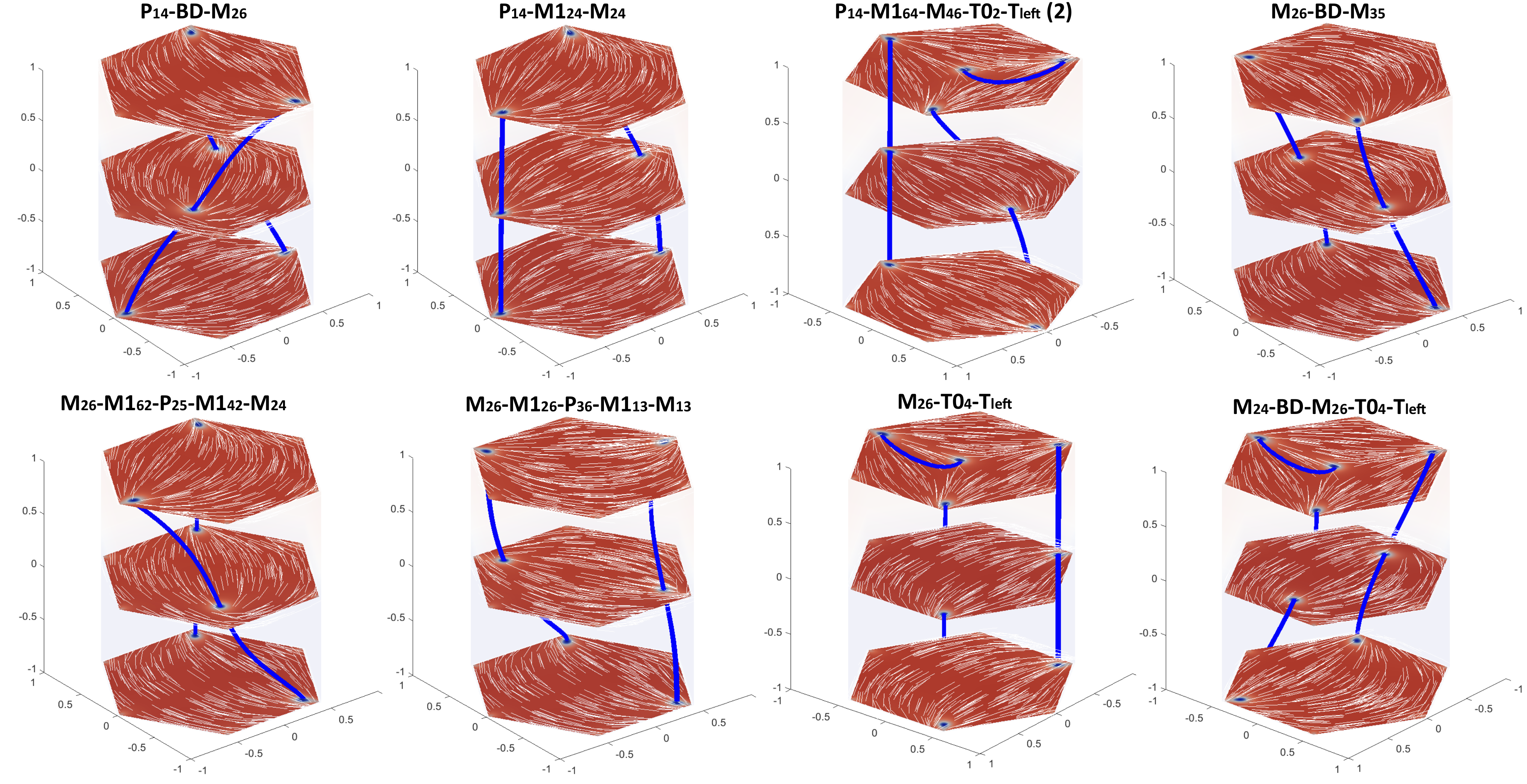}
        \caption{The unique 3D solutions of \eqref{eq:EL} for which $\P^b$ and $\P^t$ belong to different equivalence classes of solutions of \eqref{eq:top_bottom}, for $\lambda^2=600$ and $h = 0.1$. The subscripts indicate the location of  defects at the vertices or in the interior. Readers are referred to \cite{han2021solution} for nomenclature details.}
        \label{fig:mixed}
    \end{center}
\end{figure}
We can generate other classes of mixed 3D critical points by choosing $\P^t$ and $\P^b$ to belong to different solution classes e.g. $Para$ on either $z=\pm 1$ accompanied by $Meta$ or $TRI$ on the opposite boundary surface (see  Fig. \ref{fig:mixed}). We do not make definite conclusions since the solution landscape is hugely complex, but these numerical examples do demonstrate the tremendous possibility of multistability in 3D, generated by tessellating 2D solutions of the rLdG model or stacking 2D critical points on top of each other, and the sequence of the 2D critical points seems to be intimately connected to corresponding pathways between $\P^t$ and $\P^b$ on the 2D solution landscapes. 


\section{Conclusions}
\label{sec:conclusions}
In this paper, we study rLdG equilibria (solutions of \eqref{eq:EL} or equivalently critical points of \eqref{p_energy}) on the prism $V_K$, with a regular polygon cross-section $E_K$ with $K$ edges, and fixed Dirichlet boundary conditions on the top and bottom surfaces. We work in a re-scaled framework and there are two key parameters - a dimensionless parameter $\lambda$ which incorporates the cross-section edge length, and the parameter, $h$, which is a measure of the prism height. Our framework is comparable to that studied in \cite{shi2022hierarchies}, where we fix the boundary conditions on the lateral surfaces of $V_K$ but impose Neumann boundary conditions on the top and bottom surfaces. In this paper, we choose the Dirichlet conditions on the top and bottom surfaces, $\P^t$ and $\P^b$, to be $z$-invariant solutions of \eqref{eq:EL}, consistent with the lateral boundary conditions. In contrast, in \cite{shi2022hierarchies}, $\P^t$ and $\P^b$ are determined as part of the energy minimisation process, and both approaches have scientific and practical value.

For a given $\P^t$ and $\P^b$, we prove that there exists a unique rLdG energy minimiser in our admissible space, for $h$ sufficiently small. If $\lambda$ is large enough so as to allow for different classes of $z$-invariant solutions, then we take $\P^t \neq \P^b$ and search for mixed 3D critical points. It is difficult to perform exhaustive asymptotic studies as $h\to 0$, $h\to \infty$, or as $\lambda \to \infty$ in 3D, as in \cite{han2020reduced}, but we obtain some analytic insights accompanied by illuminating numerical results. We consider $V_4$ in some detail, with a square cross-section, and two different choices of $(\P^b, \P^t)$. For the first example, we take $\P^t$ and $\P^b$ to be two different diagonal solutions, $D1$ and $D2$, both of which are $z$-invariant local minimisers of \eqref{p_energy}. We prove the existence of a $D1-WORS-D2$ critical point on $V_4$ for all $h$, such that there are two defect lines along the square diagonals on $z=0$. For $h$ small enough, this critical point is globally stable and in fact, the unique critical point, and loses stability as $h$ increases. As $h$ increases, the $D1-WORS-D2$ state bifurcates into the $D1-BD- D2$ critical point, but the $D1-BD-D2$ state is not strictly stable in the sense that the second variation of the rLdG energy of the $D1-BD-D2$ tends to zero as $h\to \infty.$ We believe this to be a generic feature of multi-block critical points for which the prism is effectively partitioned into blocks of $z$-invariant critical points of \eqref{p_energy}, when $\P^t$ and $\P^b$ are $z$-invariant minimisers of \eqref{p_energy}. The multi-block critical point will effectively be constant near the top and bottom surfaces, separated by a thin transition layer near the middle of the prism. We do not observe any other mixed 3D critical points for this particular choice of $\P^t$ and $\P^b$.

In contrast, when we choose $\P^t$ and $\P^b$ to be non energy-minimising $z$-invariant solutions of \eqref{eq:EL} or critical points of \eqref{p_energy} with higher energy, the solution landscape is richer and we obtain multistability or multiple stable rLdG critical points, along with multiple unstable rLdG critical points. Here, it is not energetically preferable to have constant block structures near the top and bottom surfaces.
 In the $\lambda \to \infty$ limit
, we construct approximating profiles for the different admissible configurations by exploiting the non-uniqueness of the boundary conditions in the director framework (captured by the director angle $\theta$). This exercise has a two-fold benefit - these limiting profiles provide good initial conditions for numerical solvers, and in some cases, are good approximations to the numerically computed stable rLdG critical points. 

We generalise some of the analysis for $V_4$ to generic $V_K$, and we largely focus on a numerical computation of rLdG critical points on $V_6$, for different choices of $\P^t$ and $\P^b$.
On $E_K$, we conjecture that the $Para$ states are the $z$-invariant minimisers of \eqref{p_energy} subject to the tangent lateral boundary conditions for $K$ even, with two diagonally opposite splay vertices. For $K$ odd, we conjecture that the $Meta$ states are the $z$-invariant energy minimisers, for which the splay vertices are furthest apart. If $\P^t$ and $\P^b$ belong to the class of $z$-invariant energy minimisers, then we conjecture that $\P^b - BD-\P^t$ is the unique stable mixed 3D critical point on $V_K$ for $K>4$. The $BD$-state is an unstable $z$-invariant critical point of \eqref{p_energy} with two low-order lines (and/or point defects) in the interior of $E_K$. A distinguishing feature of these mixed 3D critical points are defect lines connecting the splay vertices on the top and bottom of $V_K$, running throughout the prism $V_K$. There are multiple possibilities for the multiplicity and locations of these defect lines, and this could be a key driving factor for multistability in 3D. We restrict ourselves to a special temperature $A = -B^2/3C$, largely to facilitate comparisons between our 2D work in \cite{han2020reduced, han2021solution} and the 3D study on polygonal prisms in this manuscript. We speculate that this work can be generalized to arbitrary $A<0$, although there will be technical difficulties and we will need to work with $\Q_c$ in \eqref{eq:Qc} that have a non-constant eigenvalue associated with $\zhat$ i.e. LdG critical points with three degrees of freedom. However, we do expect Proposition~$1$, some of our asymptotic results in the $h\to 0$ and the $\lambda \to \infty$ limits, and the quasi-stable 3D LdG critical points with multi-block structures, to be generic for all $A<0$.

The key question is - can every mixed 3D critical point be related to a 2D pathway between $\P^t$ and $\P^b$ on the 2D solution landscape on $E_K$? The answer is negative. For example, in Fig. \ref{lambda_1}, we only find two solutions, $D1-BD-D2$ and $D1-WORS-D2$ on $V_4$. These solutions are constructed by two pathways $D1\to BD\to D2$, and $D1\to WORS\to D2$ respectively. However, there is another pathway $D-J-R-J-D$, and we cannot find a corresponding 3D mixed critical point on $V_4$. Similarly, on $V_6$, we do not find 3D mixed critical points constructed by the pathways, $P-M1-M-M1-P$ or $P-Ring-P$. Equally, we find some mixed 3D critical points which do not correspond to a pathway on the 2D solution landscape for e.g., we find a 3D solution $TRI-TR-TRI$ in in Fig. \ref{fig:TRI}(c) on $V_6$, for which the middle layer $TR$ is not a rLdG critical point on 2D hexagon.

The overarching question is - what are the hidden compatibility conditions between $\P^t$ and $\P^b$, such that some 2D pathways correspond to mixed 3D critical points and some 2D pathways do not correspond to mixed 3D critical points on $V_K$, with Dirichlet boundary conditions. This is a deep question and requires extensive work, but our work offers good examples and insights which could be foundational for future work on these lines.

\section{Acknowledgments}
This work was supported by the National Key R$\&$D Program of China 2021YFF1200500, the National
Natural Science Foundation of China 12225102, 12050002, 12226316, and the Royal Society Newton
Advanced Fellowship awarded to L. Zhang and A. Majumdar. Y. Han gratefully acknowledges the
support from a Royal Society Newton International Fellowship. A.Majumdar is supported by a
Leverhulme Research Project Grant RPG-2021-401, a Leverhulme International Academic Fellowship
IAF-2019-009, the Humboldt Foundation and a University of Strathclyde New Professors Fund. The authors would also like to thank the Isaac Newton Institute for Mathematical Sciences for support and hospitality during the programme "Uncertainty Quantification and Stochastic Modelling of Materials" when work on this paper was undertaken. This work was supported by EPSRC Grant Number EP/R014604/1.

\appendix
\section{Appendix: Numerical details}
We use the saddle dynamics (SD) method \cite{yin2019high, zhang2022sinum,luo2022sinum}, which has been successfully used to efficiently compute the critical points, to find the reduced Landau-de Gennes critical states in three-dimensional prisms with fixed Dirchlet boundary conditions on the top and bottom surfaces. A critical point $\P$ is an index-$k$ saddle point for which  $\nabla^2 E(\P)$ has exactly $k$ negative eigenvalues: $\lambda_1 \leqslant \cdots \leqslant \lambda_k<0$, corresponding to $k$ unit eigenvectors $\hat{\_v}_1,\cdots,\hat{\_v}_k$ subject to $\big\langle{\hat{\_v}_i}, \hat{{\_v}}_j \big\rangle = \delta_{ij}$, $1\leqslant i, j \leqslant k$.

The SD for finding an index-$k$ saddle point $\P$, (denoted by $k$-SD) is defined as,
\begin{equation}
  \left\{
  \begin{aligned}
  \dot{\P}&=- (\I-2\sum_{i=1}^k {\_v}_i{\_v}_i^\top)\nabla E(\P), \\
    \dot{\_v}_i&=-   (\I-{\_v}_i{\_v}_i^\top-\sum_{j=1}^{i-1}2{\_v}_j{\_v}_j^\top)\nabla^2 E(\P) \_v_i,\ i=1,2,\cdots,k ,\\
  \end{aligned}
  \right.
\label{eq: SD}
\end{equation}
where $\I$ is the identity operator. To avoid evaluating the Hessian of $E(\P)$, we use the dimer
\begin{equation}
  h(\P,\_v_i)=\frac{\nabla E(\P+l\_v_i)-\nabla E(\P-l\_v_i)}{2l}
\end{equation}
as an approximation of $\nabla ^2 E(\P)\_v_i$, with a small dimer length $2l$. By setting the $k$-dimensional subspace $\mathcal{V}=\text{span} \big \{ \hat{\_v}_1,\cdots,\hat{\_v}_k \big \}$, $\hat{\P}$ is a local maximum on $\hat{\P}+\mathcal{V}$ and a local minimum on $\hat{\P}+\mathcal{V}^\perp$, where $\mathcal{V}^\perp$ is the orthogonal complement of $\mathcal{V}$.

The dynamics for $\P$ in \eqref{eq: SD} can be written as
\begin{equation}
\begin{aligned}
\dot{\P}&=\left(\I-\sum_{i=1}^k {\_v}_i{\_v}_i^\top\right) \left(-\nabla E(\P)\right)+ \left(\sum_{i=1}^k {\_v}_i{\_v}_i^\top\right) \nabla E(\P) \\
&= \left( \I-\mathcal{P}_{\mathcal{V}}  \right)\left(-\nabla E(\P)\right)+ \mathcal{P}_{\mathcal{V}} \left(\nabla E(\P)\right),
\end{aligned}
\end{equation}
where $\mathcal{P}_{\mathcal{V}}\nabla E(\P)=\left(\sum_{i=1}^k {\_v}_i{\_v}_i^\top\right)\nabla E(\P)$ is the orthogonal projection of $\nabla E(\P)$ on $\mathcal{V}$. Thus, $\left( \I-\mathcal{P}_{\mathcal{V}}  \right)\left(-\nabla E(\P)\right)$ is a descent direction on $\mathcal{V}^\perp$, and $\mathcal{P}_{\mathcal{V}} \left(\nabla E(\P)\right)$ is an ascent direction on $\mathcal{V}$. 

The dynamics for $\_v_i, i=1,2,\cdots,k$ in \eqref{eq: SD} can be obtained by minimizing the $k$ Rayleigh quotients simultaneously with the gradient type dynamics,
\begin{equation}
\min_{{\_v}_i}\text{  }\left<\_v_i, \nabla ^2 E(\P)\_v_i\right>,\ \text{s.t.}\ \left<\_v_i,\_v_j\right>=\delta_{ij}, \ j=1,2,\cdots,i,
\end{equation}
which generates the subspace $\mathcal{V}$ by computing the eigenvectors corresponding to the smallest $k$ eigenvalues of $\nabla^2 E (\P)$.

In the calculation of critical points in 3D domain, we may encounter a ill-conditioned problem, as the 3D structures like $D-BD-D$ and $D-WORS-D$ have small absolute eigenvalue when $h$ is large enough in Fig. \ref{lambda_1}, which reflects the subtle energy change when the middle slice moves up and down.
Therefore, we use a stable numerical scheme, the semi-implicit scheme for the gradient flow of $\P$ with the Barzilai-Borwein step size \cite{barzilai1988two} for the time discretization. The non-dimensionalized prism domain $V_K$ is discretised into triangular prism or cuboids with mesh size $\delta_x \leqslant 1/32$, using finite difference method for cuboid and hexagonal prism. The finite element method is used to calculate the minimisers in pentagonal prism. We apply a single-step Locally Optimal Block Preconditioned Conjugate Gradient (LOBPCG) method \cite{knyazev1987convergence} to renew the unstable eigendirections instead of the gradient type dynamics in \eqref{eq: SD},
\begin{equation}
  \begin{cases}
    \begin{aligned}
        \frac{P_{1l,n+1}-P_{1l,n}}{\Delta t_n}=& \Delta_{\delta x} P_{1l,n+1}-\lambda^2\left(P_{11,n}^2+P_{12,n}^2-\frac{B^2}{4C^2}\right)P_{1l,n+1}\\
    &+2\sum_{i=1}^k  (D_{\delta_x,P_{11}}E(\P_n)^\top v_{1,n,i}+D_{\delta_x,P_{12}}E(\P_n)^\top v_{2,n,i})v_{l,n,i},\  l=1,2,\\
        \text{Renew } \_v_{l,n,i} & \text{ as } \_v_{l,n+1,i} \text{ with single-step LOBPCG} , \  i=1,2,\cdots,k, \  l=1,2,\\
    \end{aligned}
  \end{cases}
    \label{eq:semi-implicit}
\end{equation}
where $D_{\delta x} E(\P)$ is the discretization of the Frechet derivative. When the point is close to the target critical point enough, i.e., $||D_{\delta x} E(\P)||_F^2\leq 0.01$, we use Newton's method to complete tail convergence with a higher convergence rate  \cite{shi2022hierarchies}. Noting that when the target critical point has small absolute eigenvalue, we use an Inexact-Newton method \cite{dembo1982inexact}, since the ill-conditioned linear equation in Newton iteration is hard to solve exactly. All the symmetric linear equation systems  in \eqref{eq:semi-implicit}, Newton and Inexact-Newton method are solved by The Minimal Residual Method \cite{paige1975solution}.
\bibliographystyle{unsrt}
\bibliography{main}


\end{document}